\newcommand{\spara}[1]{\smallskip\noindent\textbf{#1}}
\newenvironment {squishlist}
{\begin{list}{$\bullet$}
  { \setlength{\itemsep}{1pt}
     \setlength{\parsep}{1pt}
     \setlength{\topsep}{1pt}
     \setlength{\partopsep}{1pt}
     \setlength{\leftmargin}{1.5em}
     \setlength{\labelwidth}{1em}
     \setlength{\labelsep}{0.5em} } }
{\end{list}}
\newcommand{\tuple}[1]{\ensuremath{\langle #1 \rangle}\xspace}
\newcommand{\hash}[1]{\ensuremath{\mathcal{H}_{#1}}\xspace}
\newcommand{\pei}{\textsc{pei}\xspace}
\newcommand{\peis}{{\pei}s\xspace}
\newcommand{\pe}{\textsc{pe}\xspace}
\newcommand{\pes}{{\pe}s\xspace}
\newcommand{\potc}{\textsc{p\textup{o}tc}\xspace}
\newcommand{\dspe}{\textsc{dspe}\xspace}
\newcommand{\dspes}{{\dspe}s\xspace}
\newcommand{\dagr}{\textsc{dag}\xspace}
\newcommand{\pkg}{\textsc{Partial Key Grouping}\xspace}
\newcommand{\pkgs}{\textsc{pkg}\xspace}
\newcommand{\kg}{\textsc{kg}\xspace}
\newcommand{\sg}{\textsc{sg}\xspace}
\newcommand{\sources}{\ensuremath{\mathcal{S}}\xspace}
\newcommand{\numsources}{S\xspace}
\newcommand{\workers}{\ensuremath{\mathcal{W}}\xspace}
\newcommand{\numworkers}{W\xspace}
\newcommand{\keyspace}{\ensuremath{\mathcal{K}}\xspace}
\newcommand{\keysize}{K\xspace}
\newcommand{\mycomment}[1]{}
\DeclareMathOperator*{\expect}{\mathbb{E}}
\DeclareMathOperator*{\avg}{avg}
\DeclareMathOperator*{\argmin}{argmin}
\newtheorem{theorem}{Theorem}[section]
\newtheorem{lemma}[theorem]{Lemma}
\newtheorem{corollary}[theorem]{Corollary}
\title{The Power of Both Choices: Practical Load Balancing~for~Distributed~Stream~Processing~Engines}
\author{%
{Muhammad Anis Uddin Nasir{\small $^{\#1}$},
Gianmarco De Francisci Morales{\small $^{*2}$},
David Garc\'ia-Soriano{\small $^{*3}$}}\\
{Nicolas Kourtellis{\small $^{*4}$},
Marco Serafini{\small $^{\$5}$} }
\vspace{1.6mm}\\
\fontsize{10}{10}\selectfont\itshape
$^{\#}$KTH Royal Institute of Technology, Stockholm, Sweden\\
$^{*}$Yahoo Labs, Barcelona, Spain\\
$^{\$}$Qatar Computing Research Institute, Doha, Qatar\\
\fontsize{9}{9}\selectfont\ttfamily\upshape
$^{1}$anisu@kth.se,
$^{2}$gdfm@apache.org,
$^{3}$davidgs@yahoo-inc.com\\
$^{4}$kourtell@yahoo-inc.com,
$^{5}$mserafini@qf.org.qa
}
\begin{document}
\maketitle
\begin{abstract}

We study the problem of load balancing in distributed stream processing engines, which is exacerbated in the presence of skew.
We introduce \pkg (\pkgs), a new stream partitioning scheme that adapts the classical ``power of two choices'' to a distributed streaming setting by leveraging two novel techniques: \emph{key splitting} and \emph{local load estimation}.
In so doing, it achieves better load balancing than key grouping while being more scalable than shuffle grouping.

We test \pkgs on several large datasets, both real-world and synthetic.
Compared to standard hashing, \pkgs reduces the load imbalance by up to several orders of magnitude, and often achieves nearly-perfect load balance.
This result translates into an improvement of up to 60\% in throughput and up to 45\% in latency when deployed on a real Storm cluster.
\end{abstract}

\section{Introduction}
\label{sec:intro}

Distributed stream processing engines (\dspes) such as S4,\footnote{\url{https://incubator.apache.org/s4}} Storm,\footnote{\url{https://storm.incubator.apache.org}} and Samza\footnote{\url{https://samza.incubator.apache.org}} have recently gained much attention owing to their ability to process huge volumes of data with very low latency on clusters of commodity hardware.
Streaming applications are represented by directed acyclic graphs (\dagr) where vertices, called \emph{processing elements} (\pes), represent operators, and edges, called \emph{streams}, represent the data flow from one \pe to the next.
For scalability, streams are partitioned into sub-streams and processed in parallel on a replica of the \pe called \emph{processing element instance} (\pei).

Applications of \dspes, especially in data mining and machine learning, typically require accumulating state across the stream by grouping the data on common fields \cite{ben-haim2010spdt,berinde2010heavyhitters}.
Akin to MapReduce, this grouping in \dspes is usually implemented by partitioning the stream on a \emph{key} and ensuring that messages with the same key are processed by the same \pei.
This partitioning scheme is called \emph{key grouping}.
Typically, it maps keys to sub-streams by using a hash function.
Hash-based routing allows each source \pei to route each message solely via its key,  without needing to keep any state or to coordinate among \peis.
Alas, it also results in load imbalance as it represents a ``single-choice'' paradigm~\citep{one_choice_load}, and because it disregards the popularity of a key, i.e., the number of messages with the same key in the stream, as depicted in Figure~\ref{fig:imbalance}.

Large web companies run massive deployments of \dspes in production.
Given their scale, good utilization of the resources is critical.
However, the skewed distribution of many workloads causes a few \peis to sustain a significantly higher load than others. %
This suboptimal load balancing leads to poor resource utilization and inefficiency.

Another partitioning scheme called \emph{shuffle grouping} achieves excellent load balancing by using a round-robin routing, i.e., by sending a message to a new \pei in cyclic order, irrespective of its key.
However, this scheme is mostly suited for stateless computations.
Shuffle grouping may require an additional aggregation phase and more memory to express stateful computations (Section~\ref{sec:preliminaries}).
Additionally, it may cause a decrease in accuracy for data mining algorithms (Section~\ref{sec:applications}).

\begin{figure}[t]
\begin{center}
\includegraphics[scale=0.3]{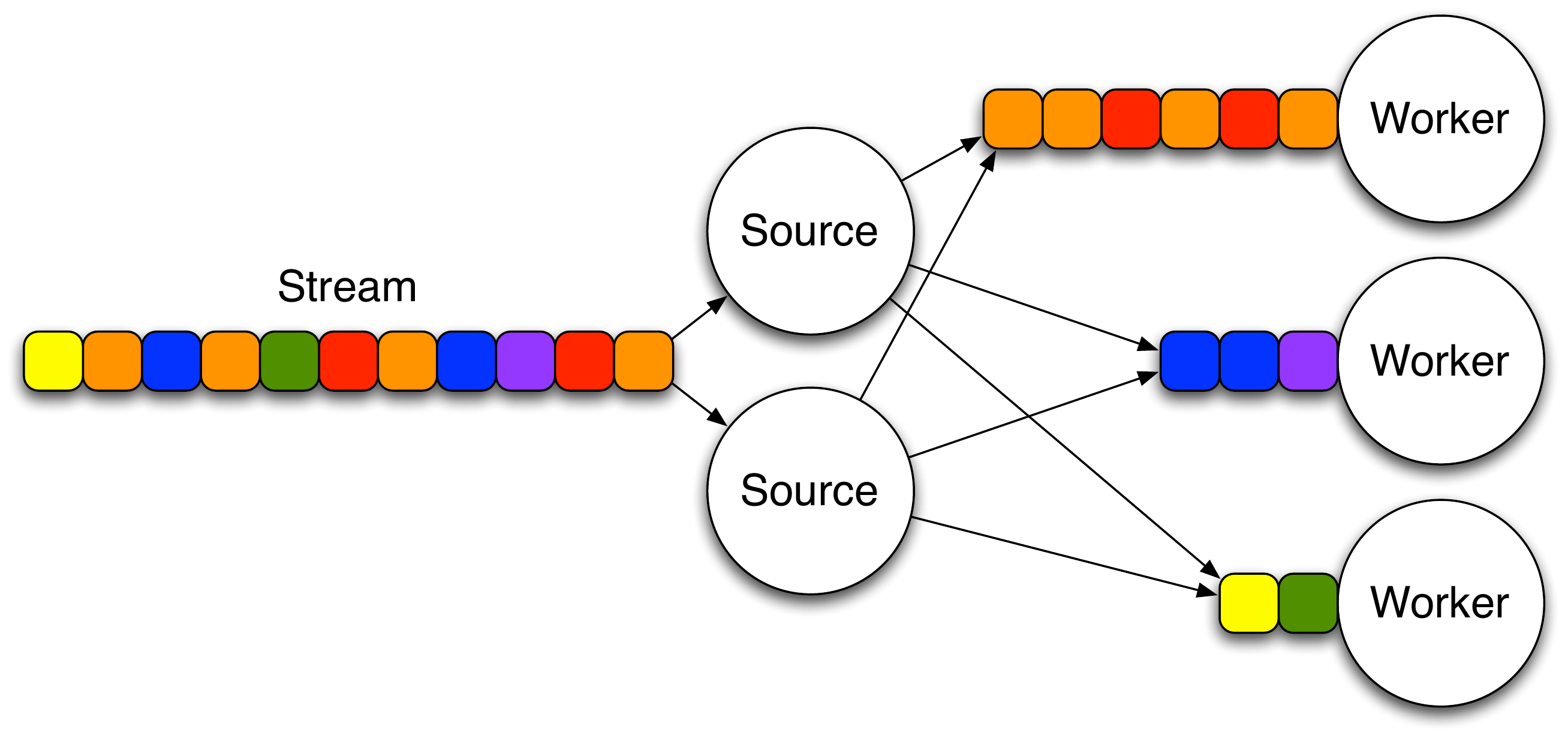}
\caption{Load imbalance generated by skew in the key distribution when using key grouping. %
The color of each message represents its key.}
\label{fig:imbalance}
\end{center}
\end{figure}

In this work, we focus on the problem of load balancing of stateful applications in \dspes when the input stream follows a skewed key distribution.
In this setting, load balancing is attained by having upstream \peis create a balanced partition of messages for  downstream \peis, for each edge of the \dagr.
Any practical solution for this task needs to be both \emph{streaming} and \emph{distributed}: the former constraint enforces the use of an online algorithm, as the distribution of keys is not known in advance, while the latter calls for a decentralized solution with minimal coordination overhead in order to ensure scalability.

\clearpage

To address this problem, we leverage the ``power of two choices''~\citep{mitzenmacher2001power} (\potc), whereby the system picks the least loaded out of two candidate \peis for each key.
However, to maintain the semantics of key grouping while using \potc (i.e., so that one key is handled by a single \pei), sources would need to track which of the two possible choices has been made for each key.
This requirement imposes a coordination overhead every time a new key appears, so that all sources agree on the choice.
In addition, sources should then store this choice in a routing table. 
Each edge in the \dagr would thus require a routing table for every source, each with one entry per key.
Given that a typical stream may contain billions of keys, this solution is not practical.

Instead, we propose to relax the key grouping constraint and allow each key to be handled by \emph{both} candidate \peis.
We call this technique \emph{key splitting}; it allows us to apply \potc without the need to agree on, or keep track of, the choices made.
As shown in Section~\ref{sec:evaluation}, key splitting guarantees good load balance even in the presence of skew.

A second issue is how to estimate the load of a downstream \pei.
Traditional work on \potc assumes global knowledge of the current load of each server, which is challenging in a distributed system.
Additionally, it assumes that all messages originate from a single source, whereas messages in a \dspe are generated in parallel by multiple sources.

In this paper we prove that, interestingly, a simple \emph{local load estimation} technique, whereby each source independently tracks the load of downstream \peis, performs very well in practice.
This technique gives results that are almost indistinguishable from those given by a global load oracle.
The combination of these two techniques (key splitting and local load estimation) enables a new stream partitioning scheme named \emph{\pkg}.

In summary, we make the following contributions.
\begin{squishlist}
\item We study the problem of load balancing in modern distributed stream processing engines.
\item We show how to apply \potc to \dspes in a principled and practical way, and propose two novel techniques to do so: key splitting and local load estimation.
\item We propose \pkg, a novel and simple stream partitioning scheme that applies to any \dspe.
When implemented on top of Apache Storm, it requires a single function and less than 20 lines of code.\footnote{Available at \url{https://github.com/gdfm/partial-key-grouping}}
\item We measure the impact of \pkgs on a real deployment on Apache Storm.
Compared to key grouping, it improves the throughput of an example application on real-world datasets by up to $60\%$, and the latency by up to $45\%$.
\end{squishlist}

\section{Preliminaries and Motivation}
\label{sec:preliminaries}

We consider a \dspe running on a cluster of machines that communicate by exchanging messages following the flow of a \dagr, as discussed.
In this work, we focus on balancing the data transmission along a single edge in a \dagr.
Load balancing across the whole \dagr is achieved by balancing along each edge independently.
Each edge represents a single stream of data, along with its partitioning scheme.
Given a stream under consideration, let the set of upstream \peis (sources) be \sources, and the set of downstream \peis (workers) be \workers, and their sizes be $|\sources| = \numsources$ and $|\workers| = \numworkers$ (see Figure~\ref{fig:imbalance}).
The input to the engine is a sequence of messages $m = \tuple{t,k,v}$ where $t$ is the timestamp at which the message is received, $k \in \keyspace \,, |\keyspace| = \keysize$ is the message key, and $v$ is the value.
The messages are presented to the engine in ascending order by timestamp.

A {\em stream partitioning} function $P_{t}: \mathcal{K} \rightarrow \mathbb{N}$ maps each key in the key space to a natural number, at a given time $t$.
This number identifies the worker responsible for processing the message.
Each worker is associated to one or more keys.
We use a definition of \emph{load} similar to others in the literature (e.g., Flux~\citep{shah2003flux}).
At time $t$, the load of a worker $i$ is the number of messages handled by the worker up to $t$:
$$ L_i(t) = |\{ \tuple{\tau,k,v} : P_{\tau}(k) = i \wedge \tau \leq t \}| $$

In principle, depending on the application, two different messages might impose a different load on workers.
However, in most cases these differences even out and modeling such application-specific differences is not necessary.

We define {\em imbalance} at time $t$ as the difference between the maximum and the average load of the workers:
$$ I(t) = \max_{i}(L_i(t)) - \avg_{i}(L_i(t)), \text{ for } i \in \workers$$

We tackle the problem of identifying a stream partitioning function that minimizes the imbalance,
while at the same time avoiding the downsides of shuffle grouping.

\subsection{Existing Stream Partitioning Functions}
\label{sec:existing-partitioning}
Data is sent between \pes by exchanging messages over the network.
Several primitives are offered by \dspes for sources to partition the stream, i.e., to route messages to different workers.
There are two main primitives of interest: \emph{key grouping} (\kg) and \emph{shuffle grouping} (\sg).
\kg ensures that messages with the same key are handled by the same \pei (analogous to MapReduce).
It is usually implemented through hashing. %

\sg routes messages independently, typically in a round-robin fashion.
\sg provides excellent load balance by assigning an almost equal number of messages to each \pei.
However, no guarantee is made on the partitioning of the key space, as each occurrence of a key can be assigned to any \peis.
\sg is the perfect choice for \emph{stateless} operators.
However, with \emph{stateful} operators one has to handle, store and aggregate multiple partial results for the same key, thus incurring additional costs. %
In general, when the distribution of input keys is skewed, the number of messages that each \pei needs to handle can vary greatly.
While this problem is not present for stateless operators, which can use \sg to evenly distribute messages, stateful operators implemented via \kg suffer from load imbalance.
This issue generates a degradation of the service level, or reduces the utilization of the cluster which must be provisioned to handle the peak load of the single most loaded server.

\spara{Example.}
To make the discussion more concrete, we introduce a simple application that will be our running example: \emph{streaming top-k word count}.
This application is an adaptation of the classical MapReduce word count to the streaming paradigm where we want to generate a list of top-k words by frequency at periodic intervals (e.g., each $T$ seconds).
It is also a common application in many domains, for example to identify trending topics in a stream of tweets.

\spara{Implementation via key grouping.} Following the MapReduce paradigm, the implementation of word count described by~\citet{neumeyer2010s4} or~\citet{noll2013rolling} uses \kg on the source stream.
The counter \pe keeps a running counter for each word.
\kg ensures that each word is handled by a single \pei, which thus has the total count for the word in the stream.
At periodic intervals, the counter \peis send their top-k counters to a single downstream aggregator to compute the top-k words.
While this application is clearly simplistic, it models quite well a general class of applications common in data mining and machine learning whose goal is to create a model by tracking aggregated statistics of the data.
Clearly \kg generates load imbalance as, for instance, the \pei associated to the key ``the" will receive many more messages than the one associated with ``Barcelona".
This example captures the core of the problem we tackle: the distribution of word frequencies follows a Zipf law where few words are extremely common while a large majority are rare.
Therefore, an even distribution of keys such as the one generated by \kg results in an uneven distribution of messages.

\spara{Implementation via shuffle grouping.} An alternative implementation uses shuffle grouping on the source stream to get partial word counts.
These counts are sent downstream to an aggregator every $T$ seconds via key grouping.
The aggregator simply combines the counts for each key to get the total count and selects the top-k for the final result.
Using \sg requires a slightly more complex logic but it generates an even distribution of messages among the counter \peis.
However, it suffers from other problems.
Given that there is no guarantee which \pei will handle a key, each \pei potentially needs to keep a counter for \emph{every} key in the stream.
Therefore, the memory usage of the application grows linearly with the parallelism level.
Hence, it is not possible to scale to a larger workload by adding more machines: the application is not scalable in terms of memory.
Even if we resort to approximation algorithms, in general, the error depends on the number of aggregations performed, thus it grows linearly with the parallelism level.
We analyze this case in further detail along with other application scenarios in Section~\ref{sec:applications}.

\subsection{Key grouping with rebalancing}
One common solution for load balancing in \dspes\ is operator migration~\citep{shah2003flux,cherniack2003scalable,xing2005dynamic,gedik2013partitioning,balkesen2013adaptive,castro2013integrating}.
Once a situation of load imbalance is detected, the system activates a rebalancing routine that moves part of the keys, and the state associated with them, away from an overloaded server.
While this solution is easy to understand, its application in our context is not straightforward for several reasons.

Rebalancing requires setting a number of parameters such as how often to check for imbalance and how often to rebalance.
These parameters are often application-specific as they involve a trade-off between imbalance and rebalancing cost that depends on the size of the state to migrate. %

Further, implementing a rebalancing mechanism usually requires major modifications of the \dspe at hand.
This task may be hard, and is usually seen with suspicion by the community driving open source projects, as witnessed by the many variants of Hadoop that were never merged back into the main line of development~\citep{abouzeid2009hadoopdb,Dittrich2010hadooppp,yang2007mapreducemerge}.

In our context, rebalancing implies migrating keys from one sub-stream to another.
However, this migration is not directly supported by the programming abstractions of some \dspes.
Storm and Samza use a \emph{coarse-grained} stream partitioning paradigm.
Each stream is partitioned into as many sub-streams as the number of downstream \peis.
Key migration is not compatible with this partitioning paradigm, as a key cannot be uncoupled from its sub-stream. 
In contrast, S4 employs a \emph{fine-grained} paradigm where the stream is partitioned into one sub-stream per key value, and there is a one-to-one mapping of a key to a \pei.
The latter paradigm easily supports migration, as each key is processed independently.

A major problem with mapping keys to \peis explicitly is that the \dspe must maintain several routing tables: one for each stream. %
Each routing table has one entry for each key in the stream.
Keeping these tables is impractical because the memory requirements are staggering.
In a typical web mining application, each routing table can easily have billions of keys.
For a moderately large \dagr with tens of edges, each with tens of sources, the memory overhead easily becomes prohibitive.

Finally, as already mentioned, for each stream there are several sources sending messages in parallel.
Modifications to the routing table must be consistent across all sources, so they require coordination, which creates further overhead.
For these reasons we consider an alternative approach to load balancing. %

\section{Partial Key Grouping}
\label{sec:algos}

The problem described so far currently lacks a satisfying solution.
To solve this issue, we resort to a widely-used technique in the literature of load balancing: the so-called \emph{``power of two choices''} (\potc).
While this technique is well-known and has been analyzed thoroughly both from a theoretical and practical perspective~\citep{adler1995parallelrandomized,azar1999balanced-allocations,byers2003geometricgeneralizations,lenzen2011parallelrandomized,mitzenmacher2001power,mitzenmacher2001potc-survey}, its application in the context of \dspes is not straightforward and has not been previously studied.
Introduced by \citet{azar1999balanced-allocations}, \potc is a simple and elegant technique that allows to achieve load balance when assigning units of load to workers.
It is best described in terms of ``balls and bins''.
Imagine a process where a stream of balls (units of work) is distributed to a set of bins (the workers) as evenly as possible.
The \emph{single-choice paradigm} corresponds to putting each ball into one bin selected uniformly at random.
By contrast, the power of two choices selects two bins uniformly at random, and puts the ball into the least loaded one.
This simple modification of the algorithm has powerful implications that are well known in the literature (see Sections~\ref{sec:theory},~\ref{sec:rel-work}). %

\spara{Single choice.}
The current solution used by all \dspes to partition a stream with key grouping corresponds to the single-choice paradigm.
The system has access to a single hash function $\hash{1}(k)$.
The partitioning of keys into sub-streams is determined by the function $ P_{t}(k) = \hash{1}(k) \bmod W $,
where $\bmod$ is the modulo operator.

The single-choice paradigm is attractive because of its simplicity: the routing does not require to maintain any state and can be done independently in parallel.
However, it suffers from a problem of load imbalance~\citep{mitzenmacher2001power}.
This problem is exacerbated when the distribution of input keys is skewed.
\spara{PoTC.}
When using the power of two choices, we have two hash functions $\hash{1}(k)$ and $\hash{2}(k)$.
The algorithm maps each key to the sub-stream assigned to the least loaded worker between the two possible choices, that is:
$P_{t}(k)$~=~$\argmin_i(L_i(t) : \hash{1}(k) = i \lor \hash{2}(k) = i)$.

The theoretical gain in load balance with two choices is exponential compared to a single choice.
However, using more than two choices only brings constant factor improvements~\citep{azar1999balanced-allocations}.
Therefore, we restrict our study to two choices.

\potc introduces two additional complications.
First, to maintain the semantics of key grouping, the system needs to \emph{keep state} and track the choices made.
Second, the system has to \emph{know the load} of the workers in order to make the right choice.
We discuss these two issues next.

\subsection{Key Splitting}

A na\"{i}ve application of \potc to key grouping requires the system to store a bit of information for each key seen, to keep track of which of the two choices needs to be used thereafter.
This variant is referred to as \emph{static} \potc.

Static \potc incurs some of the problems discussed for key grouping with rebalancing.
Since the actual worker to which a key is routed is determined dynamically, sources need to keep a routing table with an entry per key.
As already discussed, maintaining this routing table is often impractical.

In order to leverage \potc and make it viable for \dspes, we relax the requirement of key grouping.
Rather than mapping each key to one of the two possible choices, we allow it to be mapped to \emph{both choices}.
Every time a source sends a message, it selects the worker with the lowest current load among the two candidates associated to that key.
This technique, called \emph{key splitting}, introduces several new trade-offs.

First, key splitting allows the system to operate in a decentralized manner, by allowing multiple sources to take decisions independently in parallel.
As in key grouping and shuffle grouping, no state needs to be kept by the system and each message can be routed independently.

Key splitting enables far better load balancing compared to key grouping.
It allows using \potc to balance the load on the workers:
by splitting each key on multiple workers, it handles the skew in the key popularity.
Moreover, given that \emph{all} its decisions are dynamic and based on the current load of the system (as opposed to static \potc), key splitting adapts to changes in the popularity of keys over time.

Third, key splitting reduces the memory usage and aggregation overhead compared to shuffle grouping.
Given that each key is assigned to \emph{exactly two} \peis, the memory to store its state is just a constant factor higher than when using key grouping.
Instead, with shuffle grouping the memory grows linearly with the number of workers $W$.
Additionally, state aggregation needs to happen only once for the two partial states, as opposed to $W - 1$ times in shuffle grouping.
This improvement also allows to reduce the error incurred during aggregation for some algorithms, as discussed in Section~\ref{sec:applications}.

From the point of view of the application developer, key splitting gives rise to  a novel stream partitioning scheme called \pkg, which lies in-between key grouping and shuffle grouping.

Naturally, not all algorithms can be expressed via \pkgs.
The functions that can leverage \pkgs are the same ones that can leverage a combiner in MapReduce, i.e., associative functions and monoids.
Examples of applications include na\"{i}ve Bayes, heavy hitters, and streaming parallel decision trees, as detailed in Section~\ref{sec:applications}.
On the contrary, other functions such as computing the median cannot be easily expressed via \pkgs.

\spara{Example.}
Let us examine the streaming top-k word count example using \pkgs.
In this case, each word is tracked by two counters on two different \peis. %
Each counter holds a partial count for the word, while the total count is the sum of the two partial counts.
Therefore, the total memory usage is $2 \times K$, i.e., $O(K)$.
Compare this result to \sg where the memory is $O(WK)$.
Partial counts are sent downstream to an aggregator that computes the final result.
For each word, the application sends two counters, and the aggregator performs a constant time aggregation.
The total work for the aggregation is $O(K)$.
Conversely, with \sg the total work is again $O(WK)$.
Compared to \kg, the implementation with \pkgs requires additional logic, some more memory and has some aggregation overhead.
However, it also provides a much better load balance which maximizes the resource utilization of the cluster.
The experiments in Section~\ref{sec:evaluation} prove that the benefits outweigh its cost.

\subsection{Local Load Estimation}
\potc requires knowledge of the load of each worker to take its routing decision.
A \dspe is a distributed system, and, in general, sources and workers are deployed on different machines.
Therefore, the load of each worker is not readily available to each source.

Interestingly, we prove that no communication between sources and workers is needed to effectively apply \potc.
We propose a \emph{local load estimation} technique, whereby each source independently maintains a local load-estimate vector with one element per worker.
The load estimates are updated by using only local information of the portion of stream sent by each source.
We argue that in order to achieve global load balance it is sufficient that each source independently balances the load it generates across all workers.

The correctness of local load estimation directly follows from our standard definition of load in Section~\ref{sec:preliminaries}.
The load on a worker $L_i$ is simply the sum of the loads that each source $j$ imposes on the given worker:
$ L_i(t) = \sum_{j \in \sources}{L_i^j(t)}. $
Each source $j$ can keep an estimate of the load on each worker $i$ based on the load it has generated $L_i^j$. %
As long as each source keeps its own portion of load balanced, then the overall load on the workers will also be balanced.
Indeed, the maximum overall load is at most the sum of the maximum load that each source sees locally. 
It follows that the maximum imbalance is also at most the sum of the local imbalances. %
\section{Analysis}
\label{sec:theory}

\newcommand{\naturals}{{\mathbb N}}
\DeclareRobustCommand{\calA}[0]{{\mathcal A}}
\DeclareRobustCommand{\calP}[0]{{\mathcal P}}
\DeclareRobustCommand{\calQ}[0]{{\mathcal Q}}
\DeclareRobustCommand{\calD}[0]{{\mathcal D}}
\DeclareRobustCommand{\calS}[0]{{\workers}}
\DeclareRobustCommand{\calK}[0]{{\keyspace}}
\newcommand{\code}[1]{{\textsc #1}}
\newcommand{\indic}{\mathbb{I}\,}

We proceed to analyze the conditions under which \pkgs achieves good load balance.
Recall from Section~\ref{sec:preliminaries} that we have a set $\calS$ of $n$ workers at our disposal and receive a sequence of $m$ messages  $k_1,
       \ldots, k_m$ with values from a key universe $\calK$. Upon receiving the $i$-th message with value $k_i\in \calK$, we need to decide its placement %
among the workers; decisions are irrevocable. We assume one message arrives per unit of time.
Our goal is to minimize the eventual maximum load $L(m)$, which is the same as minimizing the imbalance $I(m)$.
A simple placement scheme such as shuffle grouping provides an imbalance of at most one, but
we would like to limit the number of workers processing each key to $d \in \naturals^+$.

\spara{Chromatic balls and bins.}
We model our problem in the framework of balls and bins processes, where keys correspond to colors, messages to colored balls, and workers to bins.
Choose
 $d$ independent hash
functions $\hash{1}, \ldots, \hash{d}\colon \calK \to [n]$ uniformly at random. %
Define the \code{greedy-$d$} scheme as follows:
at time $t$, the $t$-th ball (whose color is $k_t$) is placed on the bin with minimum current load among
$\hash{1}(k_t), \ldots, \hash{d}(k_t)$, i.e., $P_t(k_t) = \operatorname{argmin}_{i \in \{\hash{1}(k_t),\ldots,\hash{d}(k_t)\}} L_i(t)$.
Recall that with key splitting there is no need to remember the choice for the next time a ball of the same color appears.

Observe that when $d = 1$,
each ball color is assigned to a unique bin so no choice has to be made; this models hash-based key grouping.
At the other extreme, when $d \gg n \ln n$, all $n$  bins are valid choices, and we obtain shuffle
grouping.%

\spara{Key distribution.}
Finally, we assume the existence of an underlying discrete distribution $\calD$ supported on $\calK$ from which ball colors
are drawn, i.e., $k_1, \ldots, k_m$ is a sequence of $m$ independent samples from $\calD$. 
Without loss of generality, we identify the set $\calK$ of keys with
$\naturals^+$ or, if $\calK$ is finite of cardinality $\keysize = |\calK|$, with $[\keysize] = \{1, \ldots, \keysize\}$.
We assume them ordered by decreasing probability: if $p_i$ is the probability of drawing key $i$ from $\calD$, then
$p_1 \ge p_2 \ge p_3 \ldots$ and $\sum_{i \in \calK} p_i = 1$. We also identify the set $\calS$ of bins with~$[n]$.

\subsection{Imbalance with \pkg}

\spara{Comparison with standard problems.}
As long as we keep getting balls of different colors, our process is identical to the standard \code{greedy-$d$} process of \citet{azar1999balanced-allocations}.
This occurs with high probability provided that %
$m$ is small enough.
But for sufficiently large $m$ (e.g., when $m \ge \frac{1}{p_1}$), repeated keys will start to arrive.
Recall that for any number of choices $d\ge 2$, the maximum imbalance after throwing $m$ balls \emph{of different colors} into $n$ bins with the standard \code{greedy-$d$} process is $\frac{\ln \ln n}{\ln d} + \frac{m}{n} + O(1)$. %
Unfortunately, such strong bounds (independent of $m$) cannot apply to our setting.
To gain some intuition on what may go wrong, consider the following examples where $d$$=$$2$.

Note that for the maximum load not to be much larger than the average load, the number of bins used must not exceed $O(1/p_1)$, where $p_1$ is the maximum key probability.
Indeed, at any time we expect
the two bins $h_1(1), h_2(1)$ to contain together at
least a $p_1$ fraction of all balls, just counting the occurrences of a single key. Hence the expected maximum load among the two grows at a rate
of at least $p_1/2$ per unit of
time, while the overall average load increases by exactly $\frac{1}{n}$ per unit of time. Thus, if $p_1 > 2/ n$, the expected imbalance at time $m$ will be lower bounded by
$(\frac{p_1}2 - \frac1n) m$, which grows \emph{linearly} with $m$. This holds irrespective of the placement scheme used.

However, requiring $p_1 \le 2 / n$ is not enough to prevent imbalance $\Omega(m)$. Consider the uniform distribution
over $n$ keys. Let $B = \bigcup_{i \le n} \{ \hash{1}(i), \hash{2}(i) \}$ be the set of all bins that belong to one of the potential choices for some key. 
As is well-known, the expected size of $B$ is $n - n \left(1 - \frac{1}{n}\right)^{2n} \approx n (1 -
    \frac{1}{e^2})$. So all $n$ keys use only an $(1-\frac{1}{e^2}) \approx 0.865$ fraction of all
bins, and roughly $0.135 n$ bins will remain unused. In fact the imbalance after $m$ balls
will be at least $\frac{m}{0.865 n}-\frac{m}{n} \approx 0.156 m$.
The problem is that most concrete instantiations of our two random hash functions cause the existence of
an ``overpopulated'' set $B$ of bins inside which
the average bin load must grow faster than the average load across all bins.
(In fact, this case subsumes our first example above, where $B$ was $\{ \hash{1}(1), \hash{2}(1) \}$.)

Finally, even in the absence of overpopulated bin subsets, some inherent imbalance
is     due to deviations between the empirical and true key distributions.
For instance, suppose there are two
keys $1, 2$ with equal probability $\frac12$ and $n =4$ bins. With constant probability, key 1 is assigned to bins $1, 2$
and key 2 to bins $3, 4$. This situation looks perfect because the \code{greedy-2} choice will send each
occurrence of key 1 to bins $1,2$ alternately so the loads of bins $1, 2$ will always equal up to $\pm 1$.
However, the number of balls with key 1 seen is likely to deviate from $m/2$ by roughly $\Theta(\sqrt m)$, so either the top two or the bottom two bins will receive
$m / 4 + \Omega(\sqrt m)$ balls, and the imbalance will be $\Omega(\sqrt m)$ with constant probability.

In the remainder of this section we carry out our analysis, which broadly construed asserts that the above are the only impediments to achieve good balance. 

\spara{Statement of results.}
We  noted  that once the number of bins exceeds $2/p_1$ (where $p_1$ is the maximum key
frequency), the maximum load will be dominated by the loads of the bins to which the most frequent key is mapped. 
Hence the main case of interest is where $p_1 = O(\frac{1}{n})$.%

We focus on the case where the number of balls is large compared to the number of bins. 
The following results show that partial key grouping can significantly reduce the maximum load (and the imbalance), compared to key
grouping.
\begin{theorem}\label{thm:main}
Suppose we use $n$ bins and let $m \ge n^2$.
Assume a key distribution $\calD$ with maximum probability $p_1 \le \frac{1}{5 n}$.
Then
    the imbalance after $m$ steps of the \code{Greedy-$d$} process satisfies, with probability at least $1-\frac{1}{n}$,
$$ I(m) = 
\begin{cases}
O\left(\frac{m}{n} \cdot \frac{\ln n}{\ln \ln n}\right), & \text{if } d = 1\\
O\left(\frac{m}{n}\right), & \text{if } d \geq 2\\
    \end{cases}
    . $$
\mycomment{
the following hold:
\begin{itemize}
\item For any $d\ge 2$, the imbalance after $m$ steps of the \code{Greedy-$d$} process satisfies, with probability at least $1-\frac{1}{n}$,
$$ I(m) = \Theta\left(\frac{m}{n}\right) . $$
Moreover, with probability at least $1-\frac{1}{n}$ over the choice of the $d$ hash functions, any load balancing policy using the same set of
choices for each key must have imbalance
$\Omega\left(\frac{m}{n}\right)$.
\item For some distributions $\calD$,
the imbalance after $m$ steps of the \code{Greedy-1} process satisfies, with probability at least $1-\frac{1}{n}$,
$$ I(m) = \Theta\left(\frac{m}{n} \cdot \frac{\ln n}{\ln \ln n}\right) . $$
\end{itemize}
}
\end{theorem}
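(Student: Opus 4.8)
The plan is to route both regimes through a common reduction to a \emph{weighted} balls-into-bins problem and then diverge. First I would condition on the realized counts $c_k$, the number of occurrences of key $k$ among the $m$ draws from $\calD$. Since $c_k$ is $\mathrm{Binomial}(m,p_k)$ with $p_k \le p_1 \le \frac{1}{5n}$, and $m \ge n^2$ makes each mean $m p_k$ large enough to concentrate, a Chernoff bound together with a union bound over keys gives, with probability $\ge 1-\frac{1}{2n}$, that $\max_k c_k \le w$ for some $w = O(m/n)$ while $\sum_k c_k = m$. This is the decisive consequence of the hypotheses: every key now contributes at most a constant times the average bin load $m/n$, so the process becomes an allocation of weighted balls all of whose weights are $O(m/n)$. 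It is exactly the failure of this property when $p_1 > 2/n$ that produced the $\Omega(m)$ lower bound discussed in the preamble, and the slack in the constant $\frac15$ is what will make the concentration estimates close.

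\textbf{Two pillars.} For a set $B \subseteq [n]$ of bins, let the \emph{trapped mass} $M(B) = \sum_{k:\ \hash{1}(k),\ldots,\hash{d}(k)\in B} c_k$ be the total count of keys all of whose $d$ choices lie in $B$. The first pillar is a deterministic property of \code{greedy-$d$}: if a ball is placed in a bin while that bin already has load $\ge \ell$, then greedy chose the minimum-load candidate, so all $d$ of its choices had load $\ge \ell$ at that moment and hence belong to the final level set $B_\ell := \{i : L_i(m) \ge \ell\}$. Counting, for each $i\in B_\ell$, the $L_i(m)-\ell$ balls that raised its load above $\ell$ then yields $\sum_{i\in B_\ell}(L_i(m)-\ell) \le M(B_\ell)$, reducing the maximum-load question to controlling trapped mass on the random level sets. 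The second pillar is a uniform tail bound: with probability $\ge 1-\frac{1}{2n}$, simultaneously for every $B$, $M(B) \le C\,m\,(|B|/n)^d$ up to lower-order terms.

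\textbf{Layered induction.} Writing $\beta_\ell = |B_\ell|$ and combining the two pillars gives, for any $\ell' > \ell$, the recursion $\beta_{\ell'}(\ell'-\ell) \le C\,m\,(\beta_\ell/n)^d$. For $d \ge 2$ I would layer geometrically in $\beta$: passing from $\beta = n/2^j$ to $\beta = n/2^{j+1}$ costs a threshold increment $O(\frac{m}{n}\,2^{-j})$, and summing this geometric series over $j$ up to $\log_2 n$ adds only $O(m/n)$ to the threshold before $\beta_\ell$ drops below $1$. Since $\beta_\ell \le n/2$ already at $\ell = O(m/n)$ by the trivial averaging bound $\beta_\ell \le m/\ell$, this yields $L_{\max}(m) = O(m/n)$ and hence $I(m) = O(m/n)$. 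The exponent is where $d\ge2$ buys the improvement: the factor $(\beta/n)^d$ makes the per-level increment summable, whereas $d=1$ gives a constant increment per level and only $O(\frac{m}{n}\ln n)$. To recover the sharper $O(\frac mn\cdot\frac{\ln n}{\ln\ln n})$ for $d=1$ I would drop the crude induction and argue directly on $L_i(m)=\sum_{k:\hash{1}(k)=i}c_k$, splitting keys into a few heavy ones (count $\Theta(w)$, of which each bin receives at most $O(\frac{\ln n}{\ln\ln n})$ by the classical balls-into-bins Poisson tail) and light ones (whose contribution concentrates around $m/n$ by Bernstein, thanks to their small weights).

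\textbf{Main obstacle.} The hard part will be the uniform trapped-mass bound that drives the $d\ge2$ induction. Here $M(B)$ is a weighted sum of independent indicators with trapping probability $(|B|/n)^d$ and weights as large as $w=O(m/n)$, and it must be controlled simultaneously over all $\binom{n}{s}$ sets of each size $s=|B|$. For small $s$ the mean $m(s/n)^d$ can be $O(1)$ while the union-bound cost is $\exp\!\big(\Theta(s\ln(n/s))\big)$ and a single trapped heavy key already exceeds the target, so a naive Chernoff-plus-union argument does not close. I expect to resolve this by estimating the tails of the heavy and light keys separately --- bounding heavy keys through the rare event that their few fixed choices fall entirely inside $B$, and treating light keys by a weighted Chernoff bound --- and by exploiting $m \ge n^2$ to keep the mean-to-weight ratio large enough that the light-key concentration beats the $\exp(s\ln(n/s))$ enumeration cost, after which an overall union bound at level $\frac1n$ delivers the stated probability.
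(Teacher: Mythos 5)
Your first pillar is correct, and it is a genuinely nice observation: if \code{greedy-$d$} places a ball in a bin whose load is already $\ge\ell$, then all $d$ choices of that ball's key lie in the final level set $B_\ell$, whence $\sum_{i\in B_\ell}\left(L_i(m)-\ell\right)\le M(B_\ell)$. The fatal problem is the second pillar. The uniform bound $M(B)\le C\,m\,(|B|/n)^d$ over \emph{all} sets $B$ is not merely resistant to a Chernoff-plus-union argument, as you anticipate in your ``main obstacle'' paragraph; it is \emph{false}, and false deterministically, so no heavy/light splitting or exploitation of $m\ge n^2$ can recover it. Take $\calD$ uniform on $5n$ keys, which satisfies $p_1\le\frac{1}{5n}$ and is exactly the distribution witnessing the matching lower bound (Theorem~\ref{thm:lb}). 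After your own first step (conditioning on counts), every key has $c_k=\Theta(m/n)$ w.h.p. Now for any target size $s$, let $B$ be the union of the choice sets $\{\hash{1}(k),\ldots,\hash{d}(k)\}$ of $\lfloor s/d\rfloor$ of these keys: then $|B|\le s$ but $M(B)\ge\lfloor s/d\rfloor\cdot\Theta(m/n)=\Omega\!\left(\frac{sm}{dn}\right)$, which exceeds your target $C\,m\,(s/n)^d$ by a factor $\Omega\!\left((n/s)^{d-1}\right)$ --- unbounded whenever $s=o(n)$. So the bound you need fails on sets of every sublinear size, and these are precisely the sets your layered induction must traverse: high level sets are, by their nature, unions of bins loaded by frequent keys. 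The only bound that does hold uniformly is the linear one, $M(B)=O(m|B|/n)$ for $|B|\le n/5$ (the realized-count analogue of Corollary~\ref{coro:mu2} in the paper), but inserting a linear bound into your recursion $\beta_{\ell'}(\ell'-\ell)\le M(B_\ell)$ makes every halving of $\beta$ cost $\Theta(m/n)$ in threshold, so the induction only yields $O\!\left(\frac{m}{n}\log n\right)$ for $d\ge2$ --- it loses exactly the gain the theorem asserts. (Your separate $d=1$ argument is fine in spirit; it is essentially the paper's Lemma~\ref{lem:mu1} plus a union bound over bins.)

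The paper's proof shows how to get around this obstruction: it never needs a power-$d$ mass bound on small sets. It isolates the set $A=\{j:\mu_1(\{j\})\ge 3e/n\}$ of bins carrying anomalously much single-choice mass, shows $|A|\le n/5$ w.h.p.\ via McDiarmid's inequality, and bounds the bins outside $A$ by majorization with the one-choice process. Inside $A$ it uses only the linear bound $\mu_d(T)\le|T|/n$ for $T\subseteq A$, but in a \emph{per-step coupling} (Lemma~\ref{lem:perfect}): the probability that \code{greedy-$d$} raises the current maximum load within $A$ is $\mu_d(M_t\cap A)\le|M_t\cap A|/n\le|M_t|/n$, which is dominated by the probability that a uniformly thrown ball lands in the argmax set of an independent uniform process; stochastic domination then transfers that process's $O(m/n)$ maximum load to the bins of $A$. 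The linear bound suffices there because the comparison is made step by step against a process that is already balanced, rather than aggregated over level sets, where a linear bound is too weak. To rescue your framework you would need the power-$d$ bound for the specific random level sets $B_\ell$ rather than for all sets of a given size --- but those are exactly the sets that frequent keys overload, which is circular. I would either adopt the paper's coupling argument or find a fundamentally different way to break that circularity.
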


As the next result shows, the bounds above are best-possible.\footnote{\scriptsize However, the imbalance can be much smaller than the worst-case bounds from Theorem~\ref{thm:main} if the probability of most keys is much smaller than $p_1$, which is the case in many setups. }
\begin{theorem}\label{thm:lb}
There is a distribution $\calD$ satisfying the hypothesis of Theorem~\ref{thm:main} such that
the imbalance after $m$ steps of the \code{Greedy-$d$} process satisfies, with probability at least $1-\frac{1}{n}$,
$$ I(m) = 
\begin{cases}
\Omega\left(\frac{m}{n} \cdot \frac{\ln n}{\ln \ln n}\right), & \text{if } d = 1\\
\Omega\left(\frac{m}{n}\right), & \text{if } d \geq 2\\
    \end{cases}
    . $$
\end{theorem}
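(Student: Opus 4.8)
The plan is to exhibit a single hard distribution and analyze it separately in the two regimes. I would take $\calD$ to be uniform over $K = \lceil 5n \rceil$ keys, so that $p_1 = 1/K \le \frac{1}{5n}$ and the hypothesis of Theorem~\ref{thm:main} holds (with $m \ge n^2$). The two sources of randomness---the hash functions $\hash{1},\ldots,\hash{d}$ and the i.i.d.\ sample $k_1,\ldots,k_m$---are independent, and I will exploit this by first conditioning on a favourable outcome of the hashes and only then invoking concentration over the samples.

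For $d \ge 2$ I would give a \emph{scheme-independent} argument based on \emph{unused} bins. Let $B = \bigcup_{i} \{\hash{1}(i),\ldots,\hash{d}(i)\}$ be the set of bins that are a valid choice for at least one key, and let $U = n - |B|$. Since no ball ever lands outside $B$ under any policy (in particular under \code{Greedy-$d$}), all $m$ balls are confined to $|B|$ bins, so the maximum load is at least $\frac{m}{n - U}$ and the imbalance is at least $\frac{m}{n-U} - \frac{m}{n} = \frac{U}{n-U}\cdot\frac{m}{n}$. It therefore suffices to show $U = \Omega(n)$ with probability $\ge 1 - \frac1n$. Its expectation is $\expect[U] = n(1 - \tfrac1n)^{dK} = \Theta(n e^{-5d}) = \Theta(n)$ for constant $d$, and since changing the $d$ hash values of a single key alters $U$ by at most $d$, a bounded-difference (McDiarmid) inequality over the $K$ independent per-key hash choices gives $\Pr[U < \tfrac12\expect[U]] \le \exp(-\Omega(n))$. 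This yields imbalance $\Omega(\frac{m}{n})$, matching the upper bound of Theorem~\ref{thm:main}.

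For $d = 1$ the assignment $k \mapsto \hash{1}(k)$ throws $K = \Theta(n)$ keys into $n$ bins uniformly at random, so by the classical balls-in-bins lower bound~\citep{azar1999balanced-allocations,mitzenmacher2001power} some bin $b^\ast$ receives $\ell^\ast = (1-o(1))\frac{\ln n}{\ln\ln n}$ \emph{distinct} keys with high probability. Conditioning on the hashes, each such key $i$ occurs $N_i \sim \mathrm{Bin}(m,1/K)$ times with mean $\frac{m}{K} = \Theta(\frac{m}{n})$; since $m \ge n^2$ this mean is $\Omega(n)$, so Chernoff gives $N_i \ge \frac{m}{2K}$ for every key simultaneously except with probability $K e^{-\Omega(m/n)} \le \frac1n$. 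Summing over the keys routed to $b^\ast$, its load is at least $\ell^\ast \cdot \frac{m}{2K} = \Omega(\frac{m}{n}\cdot\frac{\ln n}{\ln\ln n})$, while the average load is exactly $\frac{m}{n}$; subtracting gives the claimed imbalance.

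I expect the main obstacle to be the high-probability (rather than in-expectation) control of the structural quantities---$U$ in the $d\ge2$ case and the maximum key-multiplicity $\ell^\ast$ in the $d=1$ case---together with tracking that all failure probabilities fit within the $\frac1n$ budget after the union bound. The count concentration via Chernoff and the reduction from ``unused bins'' to an imbalance lower bound are routine once the independence of hashing and sampling is used to decouple the two stages; the delicate point is ensuring that both the balls-in-bins lower bound and the McDiarmid tail hold with probability at least $1-\frac1n$, which forces slightly conservative constants but leaves the asymptotic order intact.
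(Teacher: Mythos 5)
Your proposal is correct and follows essentially the same route as the paper: the paper omits the formal proof, noting only that it follows by considering a uniform distribution over $5n$ keys, and its earlier discussion already sketches the unused-bins argument you formalize for $d \ge 2$ (via the set $B = \bigcup_i \{\mathcal{H}_1(i), \ldots, \mathcal{H}_d(i)\}$ of hit bins), while your $d = 1$ case is the classical balls-in-bins maximum-load bound combined with Chernoff concentration of the per-key counts. The caveats you flag---treating $d$ as constant and taking slightly conservative constants so all failure events fit in the $\frac{1}{n}$ budget---are inherent to the theorem statement rather than gaps in your argument.
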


We omit the proof of Theorem~\ref{thm:lb} (it follows by considering a uniform distribution over $5 n$ keys).
The next section is devoted to the proof of the upper bound, Theorem~\ref{thm:main}. 

\subsection{Proof}
\mycomment{
First define $\tau_d^\calD$.

\begin{theorem}
Conditioned on $\tau_d^\calD$, with high probability after $m$ throws the normalized imbalance after running the \code{Greedy-$d$} process satisfies
$$
\widehat{I}(m) \le (\tau_d^\calD - 1) + \sqrt{\frac{n \ln n \ln m}{m}}.
$$
Also, the optimal normalized imbalance after the hash functions are chosen satisfies
$$\widehat{I}^*(m) \ge (\tau_d^\calD - 1) + \sqrt{\frac{\ln n}{m}}.$$
\end{theorem}

\begin{theorem}
With high probability,
     $$\tau_d^\calD = \min(O(\frac{\ln n}{\ln \ln n}), \frac{1}{n} + \sqrt{p_1 n \ln n}$$
and
        for $d \ge 2$,
        $$\tau_d^\calD = \min(O(\frac{1}{n} + p_1)).$$
\end{theorem}

In particular, when $p_1 \le \frac{1}{n^3 \ln n}$, the maximum load will be $O(m / n)$ whether we use one choice or $d > 1$ choices.
On the other hand,
From these results we see that when $p_1 \ge \frac{1}{n}$, the imbalance with $d = 1$ or
In the typical case we consider where $p_1 < \frac{1}{n}$,

}

\spara{Concentration inequalities.}
We recall the following results, which we need to prove our main theorem.
\begin{theorem}[Chernoff bounds]\label{chernoff}
Suppose $\{X_i\}$ is a finite sequence of independent random variables with $X_i \in [0, M]$ and let $Y = \sum_i X_i$, $\mu = \sum_i \expect[X_i]$.
Then for all $\beta \ge \mu$,
$$ \Pr[ Y \ge \beta ] \le C(\mu, \beta, M) , $$
where
$$ C(\mu, \beta, M) \triangleq \exp\Big(-\frac{\beta \ln(\frac{\beta}{e \mu}) + \mu}{M}\Big). $$%
\end{theorem}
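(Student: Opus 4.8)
The plan is to prove this via the standard exponential-moment (Chernoff) method, specialized to the bounded range $[0,M]$. First I would fix a parameter $t \ge 0$, to be optimized later, and apply Markov's inequality to the nonnegative random variable $e^{tY}$, obtaining $\Pr[Y \ge \beta] = \Pr[e^{tY} \ge e^{t\beta}] \le e^{-t\beta}\,\expect[e^{tY}]$. Since the $X_i$ are independent, the moment generating function factorizes, $\expect[e^{tY}] = \prod_i \expect[e^{tX_i}]$, so the task reduces to controlling each factor separately.

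To bound a single factor I would exploit the convexity of the map $x \mapsto e^{tx}$ on the interval $[0,M]$. Convexity yields the chord inequality $e^{tx} \le 1 + \frac{x}{M}(e^{tM}-1)$ for every $x \in [0,M]$; taking expectations and writing $\mu_i = \expect[X_i]$ gives $\expect[e^{tX_i}] \le 1 + \frac{\mu_i}{M}(e^{tM}-1) \le \exp\!\big(\frac{\mu_i}{M}(e^{tM}-1)\big)$, where the last step uses $1+z \le e^z$. Multiplying over $i$ and adding the exponents, so that $\sum_i \mu_i = \mu$, produces $\expect[e^{tY}] \le \exp\!\big(\frac{\mu}{M}(e^{tM}-1)\big)$. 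Combining this with the Markov step delivers the whole family of tail bounds $\Pr[Y \ge \beta] \le \exp\!\big(-t\beta + \frac{\mu}{M}(e^{tM}-1)\big)$, valid for every $t \ge 0$.

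The last step is to optimize the free parameter. Differentiating the exponent $f(t) = -t\beta + \frac{\mu}{M}(e^{tM}-1)$ and setting $f'(t) = -\beta + \mu e^{tM} = 0$ gives the minimizer $t^\ast = \frac{1}{M}\ln(\beta/\mu)$; since $\beta \ge \mu$ by hypothesis we have $t^\ast \ge 0$, so this choice lies in the admissible range. Substituting $e^{t^\ast M} = \beta/\mu$ back into $f$ yields the optimized exponent $\frac{1}{M}\big(-\beta\ln(\beta/\mu) + \beta - \mu\big)$, and using the identity $\beta\ln(\beta/\mu) - \beta = \beta\ln(\beta/(e\mu))$ I would rewrite this as exactly $-\frac{1}{M}\big(\beta\ln(\beta/(e\mu)) + \mu\big)$, recognizing it as the claimed $C(\mu,\beta,M)$.

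There is no genuine obstacle here; this is a classical computation, and the only real care needed is bookkeeping. The one degenerate point is $\mu = 0$ (equivalently $X_i \equiv 0$ for all $i$), where $\ln(\beta/\mu)$ is ill-defined: there $Y = 0$ almost surely so $\Pr[Y \ge \beta] = 0$ for any $\beta > 0$ and the bound holds trivially as a limiting case. Thus the "hard part" amounts to verifying that the algebraic rearrangement of the optimized exponent reproduces the stated closed form verbatim, and checking that both the chord inequality and the minimization remain valid precisely on the regime $t \ge 0$, $\beta \ge \mu$ assumed in the statement.
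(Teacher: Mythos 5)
Your proof is correct. Note that the paper itself gives no proof of this theorem---it is merely \emph{recalled} as a standard concentration inequality needed for the main result---so there is no internal argument to compare against; your derivation is the canonical Chernoff--Hoeffding computation (Markov on $e^{tY}$, the convexity chord bound $e^{tx} \le 1 + \frac{x}{M}(e^{tM}-1)$ on $[0,M]$ followed by $1+z \le e^z$, then optimizing at $t^\ast = \frac{1}{M}\ln(\beta/\mu)$, admissible precisely because $\beta \ge \mu$), and the algebra checks out: the optimized exponent $\frac{1}{M}\bigl(-\beta\ln(\beta/\mu) + \beta - \mu\bigr)$ equals $-\frac{1}{M}\bigl(\beta\ln\bigl(\frac{\beta}{e\mu}\bigr) + \mu\bigr)$, which is exactly the stated $C(\mu,\beta,M)$. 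Your treatment of the degenerate case $\mu = 0$ is also appropriate.
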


\begin{theorem}[McDiarmid's inequality]\label{bounded_dif}
Let $X_1, \ldots, X_n$ be a vector of independent random variables and let $f$ be a function satisfying
    $ |f(a) - f(a') | \le 1$
whenever the vectors $a$ and $a'$ differ in just one coordinate. Then
$$ \Pr[  f(X_1, \ldots, X_n) > \expect[ f(X_1, \ldots, X_n) ] + \lambda ] \le \exp(-2 \lambda^2).$$
\end{theorem}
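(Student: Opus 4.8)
The plan is to prove McDiarmid's inequality by the standard \emph{method of bounded differences}: construct a martingale whose total increment is exactly the deviation $f(X_1,\ldots,X_n) - \expect[f(X_1,\ldots,X_n)]$, control each increment using the hypothesis, and then apply an Azuma--Hoeffding-type exponential tail bound. Concretely, write $f = f(X_1,\ldots,X_n)$ and define the Doob martingale $Z_k = \expect[f \mid X_1,\ldots,X_k]$ for $k = 0,1,\ldots,n$, so that $Z_0 = \expect[f]$ is deterministic and $Z_n = f$. Setting $D_k = Z_k - Z_{k-1}$, we have $f - \expect[f] = \sum_{k=1}^n D_k$ and $\expect[D_k \mid X_1,\ldots,X_{k-1}] = 0$, so the deviation is the sum of a martingale difference sequence, and the tail bound will follow from a Chernoff argument on the moment generating function of this sum.

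The crucial step, and the one I expect to be the main obstacle, is to show that each increment $D_k$ lies, conditionally on the past, in an interval of length at most the per-coordinate constant $c_k = 1$. Fixing a value of $(X_1,\ldots,X_{k-1})$, define $g_k(x) = \expect[f \mid X_1,\ldots,X_{k-1}, X_k = x]$; then by independence of $X_k$ from the past, $D_k = g_k(X_k) - \expect[g_k(X_k) \mid X_1,\ldots,X_{k-1}]$. Independence is essential here: it lets us integrate out $X_{k+1},\ldots,X_n$ in the same way for every candidate value of $X_k$, so that switching $X_k = x$ to $X_k = x'$ changes $g_k$ by at most $\sup |f(a) - f(a')| \le 1$, the bounded-differences bound applied in a single coordinate. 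Hence $\sup_x g_k(x) - \inf_x g_k(x) \le 1$, which places $D_k$ inside a past-measurable interval $[A_k,B_k]$ with $B_k - A_k \le c_k = 1$ and conditional mean zero. The delicate part is the measurability bookkeeping and the correct use of independence in bounding this conditional range.

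With the increments bounded, I would invoke Hoeffding's lemma conditionally: a mean-zero variable supported on an interval of length $\ell$ has MGF at most $\exp(s^2 \ell^2 / 8)$ (proved by Taylor-expanding the convex cumulant generating function, whose second derivative is at most $\ell^2/4$). This gives $\expect[e^{s D_k} \mid X_1,\ldots,X_{k-1}] \le \exp(s^2 c_k^2 / 8)$. Iterating the tower rule from $k = n$ down to $k = 1$ peels off one conditional factor at a time and yields $\expect[e^{s(f - \expect[f])}] \le \exp\!\big(s^2 \textstyle\sum_{k=1}^n c_k^2 / 8\big)$. A final Chernoff step, $\Pr[f - \expect[f] > \lambda] \le e^{-s\lambda}\,\expect[e^{s(f-\expect[f])}]$, optimized at $s = 4\lambda / \sum_k c_k^2$, produces $\exp(-2\lambda^2 / \sum_k c_k^2)$; under the stated hypothesis $c_k = 1$ this is $\exp(-2\lambda^2/n)$, so matching the displayed $\exp(-2\lambda^2)$ amounts to rescaling the deviation $\lambda$ by $\sqrt{n}$ (i.e. reading the inequality for the normalized quantity). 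Every piece other than the bounded-increment lemma is routine.
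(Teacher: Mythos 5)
Your proof is correct, and there is nothing in the paper to compare it against: the paper states McDiarmid's inequality as a known result and never proves it (it is invoked once, in the proof of Theorem~\ref{thm:main}, to concentrate the bin count $|A|$). The route you take --- the Doob martingale $Z_k = \expect[f \mid X_1,\ldots,X_k]$, the use of independence to bound each conditional increment inside a past-measurable interval of length $c_k$, the conditional Hoeffding lemma $\expect[e^{sD_k} \mid X_1,\ldots,X_{k-1}] \le \exp(s^2 c_k^2/8)$, the tower-rule peeling, and the optimized Chernoff step --- is exactly the classical ``method of bounded differences'' proof of the cited result, and all the steps you give, including the measurability bookkeeping around $g_k$, are sound.

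One point you raise in passing deserves to be made loudly, because it is a correction of the statement as printed rather than a cosmetic normalization. What your argument establishes under the hypothesis $c_k = 1$ for all $k$ is $\Pr[f > \expect[f] + \lambda] \le \exp(-2\lambda^2/n)$, and this $n$ in the denominator cannot be removed: taking $f(X_1,\ldots,X_n) = \sum_k X_k$ with independent fair Bernoulli variables and $\lambda = \sqrt{n}/2$, the left side is bounded below by a constant while the printed bound $\exp(-2\lambda^2)$ would be $e^{-n/2}$. So the display in Theorem~\ref{bounded_dif} is only correct if $\lambda$ is read in units of $\bigl(\sum_k c_k^2\bigr)^{1/2}$, as you say. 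This is not academic for the paper's own use of the theorem: there the variables are $X_i = \hash{1}(i)$ for $i$ ranging over the $\keysize$ keys, the function $f = |A|$ has bounded-difference constant $2$ (rerouting one key's probability mass can change the membership of two bins in $A$), and the deviation needed is $\lambda = \Theta(n)$, so the honest bound is $\exp\bigl(-\Omega(n^2/\keysize)\bigr)$ and the claimed high-probability conclusion implicitly requires the correctly normalized form you proved (and a constraint relating $\keysize$ to $n$). In short: your proof is the standard one, it is right, and the version of the inequality you derive is the one the paper actually needs.
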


\spara{The $\mu_r$ measure of bin subsets.}
For every nonempty set of bins $S \subseteq[n]$ and $1 \le r \le d$, define
$$ \mu_r(S) = \sum\{ p_i \mid \{\hash{1}(i), \ldots, \hash{r}(i)\} \subseteq B \} .$$
We will be  interested in $\mu_1(B)$ (which measures the probability that a random key from $\calD$ will have its choice inside $B$) and $\mu_d(B)$
(which measures the probability that a random key from $\calD$ will have all its choices inside $B$).
Note that $\mu_1(B) = \sum_{j \in B} \mu_1(\{j\})$ and $\mu_d(B) \le \mu_1(B)$.

\begin{lemma}\label{lem:mu1}
For every $B \subseteq [n]$,
     $\expect [\mu_1(B)]= \frac{|B|}{n}$
and, if $p_1 \le \frac{1}{n}$,
       $$ \Pr \left[ \mu_1(B) \ge \frac{|B|}{n} (e \lambda) \right]  \le \left(\frac{1}{\lambda^\lambda}\right)^{|B|}. $$
\end{lemma}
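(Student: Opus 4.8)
The plan is to write $\mu_1(B)$ as a sum of independent, bounded random variables and then invoke the Chernoff bound of Theorem~\ref{chernoff}. For each key $i \in \calK$, set $X_i = p_i \cdot \indic[\hash{1}(i) \in B]$, where the only randomness is the choice of the hash function $\hash{1}$. Treating $\hash{1}$ as fully random, the images $\hash{1}(i)$ of distinct keys are independent and uniform over $[n]$, so the $\{X_i\}$ are independent, each satisfies $X_i \in [0, p_i] \subseteq [0, p_1]$, and $\mu_1(B) = \sum_i X_i$. The expectation is then immediate: $\expect[X_i] = p_i \Pr[\hash{1}(i) \in B] = p_i \frac{|B|}{n}$, whence $\expect[\mu_1(B)] = \frac{|B|}{n} \sum_i p_i = \frac{|B|}{n}$, proving the first claim.

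For the tail bound I would apply Theorem~\ref{chernoff} with $\mu = \frac{|B|}{n}$, threshold $\beta = \frac{|B|}{n}(e\lambda) = e\lambda\mu$, and $M = \frac{1}{n}$, the latter being valid because $X_i \le p_1 \le \frac{1}{n}$ under the hypothesis. The admissibility condition $\beta \ge \mu$ holds once $e\lambda \ge 1$, which covers the meaningful regime $\lambda \ge 1$. The theorem then gives $\Pr[\mu_1(B) \ge \beta] \le C(\mu, \beta, M) = \exp\!\big(-n(\beta\ln\tfrac{\beta}{e\mu} + \mu)\big)$.

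It remains to simplify the exponent. Since $\beta = e\lambda\mu$ we have $\frac{\beta}{e\mu} = \lambda$, so $\beta\ln\tfrac{\beta}{e\mu} + \mu = \mu(e\lambda\ln\lambda + 1)$. Multiplying by $n$ and using $n\mu = |B|$ turns the exponent into $|B|(e\lambda\ln\lambda + 1)$, yielding
$$ \Pr\left[\mu_1(B) \ge \frac{|B|}{n}(e\lambda)\right] \le \exp\big(-|B|(e\lambda\ln\lambda + 1)\big) \le \exp(-|B|\lambda\ln\lambda) = \left(\frac{1}{\lambda^\lambda}\right)^{|B|}, $$
where the last inequality uses $e\lambda\ln\lambda + 1 \ge \lambda\ln\lambda$ for $\lambda \ge 1$. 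This establishes the second claim.

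The argument is essentially careful bookkeeping rather than a deep obstacle; the two points that genuinely require care are (i) justifying that the $X_i$ are independent, which relies on treating $\hash{1}$ as a fully random hash function so that images of distinct keys are independent, and (ii) the choice $M = \frac{1}{n}$ in place of the sharper $M = p_1$. Point (ii) is precisely where the hypothesis $p_1 \le \frac{1}{n}$ is consumed, and it is what makes the bound come out clean and independent of the individual key probabilities.
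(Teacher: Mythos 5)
Your proof is correct and follows essentially the same route as the paper: both decompose $\mu_1(B)$ into the independent variables $p_i\,\indic[\hash{1}(i)\in B]$ and apply Theorem~\ref{chernoff} with $\mu = \frac{|B|}{n}$ and $\beta = \frac{|B|}{n}e\lambda$; the only cosmetic difference is that you take $M = \frac{1}{n}$ up front, whereas the paper takes $M = p_1$ and invokes $np_1 \le 1$ at the end, which is the same use of the hypothesis in a different place.
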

\begin{proof}
The first claim follows from linearity of expectation and the fact that $\sum_i p_i = 1$.
For the second, let $|B| = k$. Using Theorem~\ref{chernoff},
$\Pr \left[ \mu_1(B) \ge \frac{k}{n} (e \lambda) \right]$ is at most
    $$
C\left(\frac{k}{n}, \frac{k}{n} e \lambda, p_1\right)
                                                   \le \exp\left(-\frac{k}{n p} e \lambda \ln \lambda \right) 
                                                   \le \exp(-k \lambda \ln \lambda),$$

since $n p_1 \le 1$.
\end{proof}

\mycomment{
    \begin{corollary}
    With high probability, for all $B \subseteq [n]$ it holds that
    $$ \mu_1(B) \le O\left(\frac{|B|}{n} \frac{\ln n}{\ln \ln n}\right). $$
    \end{corollary}
    \begin{proof}
    Since $\mu_1(B) = \sum_{j \in B} \mu_1(\{j\})$, it suffices to show the claim when $|B| = 1$.
    This follows from Lemma~\ref{lem:mu1} by setting $|B| = 1$, $\lambda = O(\frac{\ln n}{\ln \ln n})$, and applying the union bound.
    \end{proof}
}

\begin{lemma}\label{lem:mu2}
For every $B \subseteq [n]$,
     $\expect [\mu_d(B)]= \left(\frac{|B|}{n}\right)^d$
and, provided that $p_1 \le \frac{1}{5 n}$,
       $$ \Pr \left[ \mu_d(B) \ge \frac{|B|}{n} \right]  \le \left(\frac{e |B|}{n}\right)^{5 |B|}. $$
\end{lemma}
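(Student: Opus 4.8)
The plan is to express $\mu_d(B)$ as a sum of independent, bounded random variables and apply the Chernoff bound of Theorem~\ref{chernoff} with a carefully chosen deviation target. Write $k = |B|$ and, for each key $i \in \calK$, let $X_i = p_i \indic[\{\hash{1}(i),\ldots,\hash{d}(i)\} \subseteq B]$, so that $\mu_d(B) = \sum_i X_i$. Since the hash functions are drawn uniformly and independently, the vectors $(\hash{1}(i),\ldots,\hash{d}(i))$ are independent across $i$, hence the $X_i$ are independent; moreover $X_i \in [0, p_i] \subseteq [0, p_1]$, so the common bound is $M = p_1$. The expectation claim follows at once: each indicator occurs with probability $(k/n)^d$ (all $d$ independent uniform choices must land in $B$), so $\expect[\mu_d(B)] = \sum_i p_i (k/n)^d = (k/n)^d$ using $\sum_i p_i = 1$. (If $\calK$ is countably infinite I would first truncate to a finite prefix and pass to the limit, the sum being convergent.)

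For the tail bound I would invoke Theorem~\ref{chernoff} with $\mu = (k/n)^d$, $M = p_1$, and deviation target $\beta = k/n$; note $\beta \ge \mu$ since $k \le n$ and $d \ge 1$, so the hypothesis holds. First dispose of the trivial regime: if $k \ge n/e$ then $(ek/n)^{5k} \ge 1$ and the claim is vacuous, so I may assume $k < n/e$, which yields $\ln(n/k) > 1$. The key quantity is the rate $N \triangleq \beta\ln\!\big(\tfrac{\beta}{e\mu}\big) + \mu = \tfrac{k}{n}\big((d-1)\ln\tfrac{n}{k} - 1\big) + (k/n)^d$; when $k < n/e$ and $d \ge 2$ we have $(d-1)\ln(n/k) \ge \ln(n/k) > 1$, so $N > 0$. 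Because $C(\mu,\beta,M) = \exp(-N/M)$ is increasing in $M$ whenever $N > 0$, the hypothesis $p_1 \le \tfrac{1}{5n}$ lets me replace $M = p_1$ by the larger value $\tfrac{1}{5n}$, giving $\Pr[\mu_d(B) \ge k/n] \le \exp(-5nN)$.

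It then remains to simplify $5nN$. Expanding gives $5nN = 5k(d-1)\ln(n/k) - 5k + 5k(k/n)^{d-1}$, and discarding the two nonnegative contributions $5k(k/n)^{d-1} \ge 0$ and $5k(d-2)\ln(n/k) \ge 0$ (the latter using $d \ge 2$ and $k \le n$) yields $5nN \ge 5k\ln(n/k) - 5k = -5k\ln(ek/n)$, whence $\exp(-5nN) \le (ek/n)^{5k}$, as required. The main obstacle is not conceptual but bookkeeping: one must (i) pick the deviation target $\beta = k/n$ — large enough to be a genuine large deviation from the mean $(k/n)^d$ when $d \ge 2$, yet small enough that the rate collapses cleanly — and (ii) justify trading $p_1$ for $1/(5n)$ through monotonicity of $C$ in $M$, which is exactly where the sign condition $N > 0$, and hence the case split at $k = n/e$, is needed. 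I would also flag that the stated bound is meaningful only for $d \ge 2$: for $d = 1$ the rate $N$ vanishes and $\Pr[\mu_1(B) \ge k/n]$ is a constant-order event, consistent with the separate treatment of $\mu_1$ in Lemma~\ref{lem:mu1}.
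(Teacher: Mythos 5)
Your proof is correct and follows essentially the same route as the paper's: both apply the Chernoff bound of Theorem~\ref{chernoff} with $\mu = (|B|/n)^d$, $\beta = |B|/n$, $M = p_1$, and then use $n p_1 \le \tfrac{1}{5}$ together with $d \ge 2$ to reduce the exponent to $-5|B|\ln\bigl(\tfrac{n}{e|B|}\bigr)$. The only difference is that you make explicit two points the paper glosses over --- the vacuous regime $|B| \ge n/e$ and the fact that the bound genuinely requires $d \ge 2$ --- which is careful bookkeeping rather than a different argument.
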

\begin{proof}
Again the first claim is easy. For the second, let $|B| = k$. Using Theorem~\ref{chernoff},
$\Pr \left[ \mu_d(B) \ge \frac{k}{n} \right]$ is at most
\begin{align*}
 C\left(\Big(\frac{k}{n}\Big)^d, \frac{k}{n}, p_1\right)
                                            &\le \exp\left(-\frac{k (d - 1)}{n p_1} \ln \left(\frac{n}{e k}\right) \right) \\
                                            &\le \exp\left(-5 k \ln \left(\frac{n}{e k}\right)\right)
\end{align*}    

since $n p_1 \le \frac{1}{5}$.
\end{proof}

\begin{corollary}\label{coro:mu2}
Assume $p_1 \le \frac{1}{4n}$, $d \ge 2$.  Then, with high probability, 
       $$ \max \left\{ \frac{\mu_d(B)}{|B| / n} \middle| B \subseteq [n], |B| \le \frac{n}{5} \right\} \le 1.$$
\end{corollary}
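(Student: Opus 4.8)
The plan is to derive the claim by a union bound over all bin-subsets of size at most $n/5$, stratified by cardinality, feeding each individual set into the tail estimate of Lemma~\ref{lem:mu2}. The event to be ruled out is that some $B$ with $|B| \le n/5$ violates $\mu_d(B) \le |B|/n$, and Lemma~\ref{lem:mu2} already controls the probability of this for a single set of a given size.

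First I would fix a cardinality $k$ with $1 \le k \le n/5$ and bound the number of candidate sets by $\binom{n}{k} \le (en/k)^k$. Multiplying this count by the per-set failure probability $(ek/n)^{5k}$ supplied by Lemma~\ref{lem:mu2} and collecting terms, the total contribution of all size-$k$ sets is at most $\bigl(e^6 (k/n)^4\bigr)^k$. The crucial observation — and the reason the threshold in the statement is precisely $n/5$ — is that $e^6 x^4 < 1$ exactly when $x < e^{-3/2} \approx 0.223$; since $k/n \le 1/5 = 0.2$ sits safely below this cutoff, the base obeys $e^6(k/n)^4 \le e^6/625 < 1$ for every admissible $k$. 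Thus each stratum contributes a quantity with a base bounded away from $1$.

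It then remains to sum over $k$ and verify the total is $o(1)$. Writing the exponent as $k\bigl(6 + 4\ln(k/n)\bigr)$, one checks that as a function of $k$ it forms a valley on the admissible range, so its maximum is attained at an endpoint: at $k = 1$ the term is $e^6/n^4$, while at $k = n/5$ it is $e^{-\Theta(n)}$. Hence the largest term over the whole range is the $k=1$ term $e^6 n^{-4}$, and the full sum is at most $(n/5)\cdot e^6 n^{-4} = O(n^{-3}) = o(1/n)$. This shows that with probability $1 - O(n^{-3})$ no set $B$ of size at most $n/5$ has $\mu_d(B) > |B|/n$, which is exactly the corollary.

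The step I expect to be the main obstacle is the convergence of the stratified union bound near the upper end $k \approx n/5$, where the base $e^6(k/n)^4$ creeps toward $1$: the argument survives only because $1/5$ lies strictly below the critical fraction $e^{-3/2}$, and one must confirm that it is the polynomially small $k=1$ term — rather than the exponentially small large-$k$ terms — that dominates the sum. (I invoke the exponent-$5$ tail bound of Lemma~\ref{lem:mu2} under its hypothesis $p_1 \le \tfrac{1}{5n}$; the same scheme goes through for any $p_1 = O(1/n)$ by adjusting the admissible-size threshold so that the base stays below one.)
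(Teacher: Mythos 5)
Your proposal is correct and follows essentially the same route as the paper: a union bound over all subsets of size $k \le n/5$, the estimate $\binom{n}{k} \le (en/k)^k$, and the per-set tail bound of Lemma~\ref{lem:mu2}, yielding $\sum_{k \le n/5} \bigl(e^6 (k/n)^4\bigr)^k = o(1/n)$. The only difference is that you carry out the final summation explicitly (endpoint analysis of the convex exponent, dominance of the $k=1$ term) where the paper simply asserts the $o(1/n)$ bound, and you correctly flag that the tail bound must be invoked under the hypothesis $p_1 \le \frac{1}{5n}$ of Lemma~\ref{lem:mu2}, a mismatch the paper's own statement ($p_1 \le \frac{1}{4n}$) leaves unaddressed.
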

\begin{proof}
We use Lemma~\ref{lem:mu1} and the union bound. The probability that the claim fails to hold is bounded by
\begin{align*}
 \sum_{|B| \le n / 5} \Pr \left[ \mu_d(B) \ge \frac{k}{n} \right] &\le \sum_{k \le n/5} \binom{n}{k} \left(\frac{e k}{n}\right)^{5k} \\
                                                                  &\le \sum_{k \le n/5} \left(\frac{e n}{k}\right)^k \left(\frac{e k}{n}\right)^{5k}  
                                                                  &=o\left(\frac{1}{n}\right),
\end{align*}                                                                  
where we used  $\binom{n}{k} \le \left(\frac{e n}{k}\right)^k$, valid for all $k$.
\end{proof}

For a scheduling algorithm $\calA$ and a set $B \subseteq [n]$ of bins, write $L^\calA_B(t) = 
\max_{j\in B} L_j(t)$ for the maximum load among the bins in $B$ after $t$ balls have been
processed by $\calA$.
\begin{lemma}\label{lem:perfect}
Suppose there is a set $A \subseteq [n]$ of bins such that for all $T \subseteq A$,
        $ \mu_d(T) \le \frac{|T|}{n}. $
Then $\calA=\text{\code{Greedy-$d$}}$ satisfies $L^\calA_A(m) = O(\frac{m}{n}) + L^\calA_{[n]\setminus A}(m)$ with high probability.
\end{lemma}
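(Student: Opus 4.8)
The plan is to measure loads relative to the best bin outside $A$ and to show that the bins of $A$ rise only $O(m/n)$ above it. Write $h = L^\calA_{[n]\setminus A}(m)$ and, for $s \ge 0$, set $B_s = \{\, j \in A : L_j(m) \ge h+s \,\}$ and $\Phi_s = \sum_{j \in A}(L_j(m) - h - s)^+ = \sum_{t>s}|B_t|$; the goal becomes $\max\{s : B_s \neq \emptyset\} = O(m/n)$ with high probability, since $L^\calA_A(m) = h + \max\{s : B_s \neq \emptyset\}$. The engine is a locality observation: for $s \ge 1$, any ball that \code{greedy-$d$} places on a bin of $A$ whose load just before placement is at least $h+s$ must have all $d$ of its choices at load $\ge h+s$ at that instant; since no bin outside $A$ ever exceeds $h$, all $d$ choices then lie in $A$, in fact in $B_s$. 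The number of such balls is exactly $\Phi_s$. Thus the dynamics above level $h$ inside $A$ is driven only by \emph{trapped} keys, those whose choices lie entirely in $A$, whose every sub-region $T \subseteq A$ absorbs key-mass at most $|T|/n$ by hypothesis.

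Next I would turn this into a counting bound. For a fixed bin set $S$ let $X(S)$ be the number of the $m$ balls whose key has all $d$ choices inside $S$; then $X(S)$ is $\mathrm{Binomial}(m, \mu_d(S))$ with mean $m\,\mu_d(S) \le m|S|/n$ for $S \subseteq A$. Applying the Chernoff bound of Theorem~\ref{chernoff} (with $M = 1$) and union-bounding over all subsets of each cardinality, exactly as in the proof of Corollary~\ref{coro:mu2}, I would show that with probability $1 - o(1/n)$ one has $X(S) \le c\,m|S|/n$ simultaneously for all $S \subseteq A$ and an absolute constant $c$; the slack $m/n \ge n$ is what makes the union bound converge. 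The locality observation gives $\Phi_s \le X(B_s)$, hence $\Phi_s \le c\,\tfrac{m}{n}\,|B_s|$ for every $s$, uniformly over the random, process-dependent level sets $B_s$.

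Finally I would feed $|B_s| = \Phi_{s-1} - \Phi_s$ into $\Phi_s \le c\frac{m}{n}|B_s|$ to get a contraction $\Phi_s \le \frac{cm}{n+cm}\,\Phi_{s-1}$, and iterate it from $\Phi_0 \le m$ down to the first level where $\Phi_s < 1$, at which point $B_s$ is empty; this translates back to $L^\calA_A(m) \le h + O(m/n)$.

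The hard part is obtaining the \emph{sharp} $O(m/n)$ rather than a log-lossy bound. Because $m \gg n$, the contraction ratio $\frac{cm}{n+cm} = 1 - \Theta(n/m)$ is very close to $1$, so naively iterating the linear estimate $\Phi_s \le c\frac{m}{n}|B_s|$ only drives $\Phi_s$ below $1$ after $\Theta(\frac{m}{n}\ln m)$ levels, i.e., it proves an excess of $O(\frac{m}{n}\ln m)$. Closing the gap to $O(m/n)$ is the crux: one must show the top level sets collapse within $O(m/n)$ levels, which requires a sharper accounting than the worst-case $\mu_d(B_s) \le |B_s|/n$ affords once $B_s$ becomes small, while still controlling, with a single uniform union bound, the concentration of $X(B_s)$ over level sets whose sizes are not known in advance. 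Measuring the excess above $\max(h, \Theta(m/n))$, so that only $O(m/n)$ levels are ever in play, and handling the bulk and the tail by separate estimates, is where the real difficulty lies.
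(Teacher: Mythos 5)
Your setup and the partial results you establish are sound: the locality observation (for $s\ge 1$, every ball placed at height at least $h+s$ must have all $d$ choices inside $B_s$, hence $\Phi_s \le X(B_s)$), the identity $|B_s| = \Phi_{s-1}-\Phi_s$, the simultaneous counting bound $X(S) \le c\,m|S|/n$ for all $S \subseteq A$ (whose union bound indeed converges because $m \ge n^2$), and the resulting contraction $\Phi_s \le \frac{cm}{n+cm}\,\Phi_{s-1}$ are all correct. But, as you yourself concede in the last paragraph, this machinery only yields an excess of $O\bigl(\frac{m}{n}\ln m\bigr)$ over $L^{\mathcal{A}}_{[n]\setminus A}(m)$, not the $O\bigl(\frac{m}{n}\bigr)$ that the lemma asserts, and you do not supply the ``sharper accounting'' you acknowledge is needed. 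So the proposal proves a strictly weaker statement, and the missing piece is not a technicality: the per-level estimate $\Phi_s \le c\frac{m}{n}|B_s|$ by itself is compatible with load profiles whose level sets decay geometrically at rate $1-\Theta(n/m)$, i.e., with $\Theta\bigl(\frac{m}{n}\ln m\bigr)$ nonempty levels, so no amount of iterating that estimate (including the refinement that once $|B_s|=O(1)$ the remaining excess is $O(m/n)$, which still requires reaching a constant-size level set first) can recover the sharp bound. A genuinely new ingredient is required.

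The paper's proof supplies that ingredient with a global coupling/majorization argument rather than level-set counting. One tracks only the balls that \emph{increase the current maximum load inside $A$}. Each such ball either (i) has all its choices among the currently maximum-loaded bins of $A$, an event whose conditional probability, given any load vector with maximum-load set $M_t$, is $\mu_d(M_t\cap A) \le |M_t\cap A|/n \le |M_t|/n$ by the hypothesis of the lemma --- which is exactly the conditional probability that a single uniformly random throw lands on a maximum-loaded bin; or (ii) has a choice outside $A$ whose load ties the maximum, and this type of increment can occur at most $L^{\mathcal{A}}_{[n]\setminus A}(m)$ times, because each occurrence raises the maximum in $A$ by one, and once the maximum in $A$ exceeds the final maximum outside $A$ the event can never recur. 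Stochastic domination then bounds the number of type-(i) increments by the maximum load of $m$ uniform random throws into $n$ bins, which is $O(m/n)$ with high probability for $m \ge n^2$. Comparing increments of the maximum against a single-choice process, instead of comparing cardinalities of level sets, is precisely what converts your $O\bigl(\frac{m}{n}\ln m\bigr)$ into the claimed $O\bigl(\frac{m}{n}\bigr)$; without it, your argument does not prove the lemma.
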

\begin{proof}
We use a coupling argument. Consider the following two independent processes $\calP$ and $\calQ$: $\calP$ proceeds as
\code{Greedy-$d$}, while $\calQ$ picks the bin for each ball independently at
random from $[n]$ and increases its load. Consider any time~$t$ at which the load vector is $\omega_t \in \naturals^n$
and $M_t=M(\omega_t)$ is the set of bins with
maximum load. After handling the $t$-th ball, let $X_t$ denote the event that $\calP$ increases the maximum load in $A$ because the new ball has all choices in
$M_t \cap A$, and $Y_t$ denote the event that $\calQ$ increases the maximum load in $A$. Finally, let $Z_t$ denote the event
that $\calP$ increases the maximum load in $A$ because the new ball has some choice in $M_t \cap A$ and some choice in $M_t \setminus
A$, but the load of one of  its choices in $M_t \cap A$ is no larger. We identify these events with their indicator
random variables.

Note that the maximum load in $A$ at the end of Process $\calP$ is $L_A^\calP(m) = \sum_{t\in[m]} (X_t + Z_t)$, while  at the
end of Process $\calQ$ is $L_A^\calQ(m) = \sum_{t\in[m]}
Y_t$. Conditioned on any load vector $\omega_t$, the probability of $X_t$ is $$\Pr[ X_t \mid \omega_t] =
\mu_d(M_t \cap A) \le \frac{|M_t \cap A|}{n}  \le \frac{|M_t|}{n} = \Pr[ Y_t \mid \omega_t ],$$
    So $\Pr [ X_t   \mid
\omega_t ] \le \Pr [Y_t \mid \omega_t]$, which implies that for any $b \in \naturals$, $\Pr [ \sum_{t \in [m]} X_t
\le b ] \ge \Pr[
\sum_{t \in [m]} Y_t
\le b ].$ But with high probability, the maximum load of Process $\calQ$ is $b = O(m / n)$, so $\sum_t X_t = O(m / n)$
holds with at least the same probability. On the other hand, $\sum_t Z_t \le L_{[n]\setminus A}^\calP(m)$ because each occurrence of
$Z_t$ increases the maximum load on $A$, and once a time $t$ is reached such that $L_A^\calP(t) > L_{[n]\setminus
    A}^\calP(m)$, event $Z_t$ must cease to happen. Therefore
$L_A^\calP(m) = \sum_{t\in[m]} X_t + \sum_{t\in[m]} Z_t \le O(m / n) + L_{[n]\setminus A}^\calP(m)$,
yielding the result.
\end{proof}

\begin{proof}[Proof of Theorem~\ref{thm:main}]
Let $$A = \left\{j \in [n] \mid \mu_1(\{j\}) \ge \frac{3 e}{n}\right\}.$$
Observe that every bin $j \notin A$ has $\mu_1(\{j\}) < \frac{3 e}{n}$ and this implies that, conditioned on any choice of hash functions, the maximum load of all bins outside $A$ is
at most $\frac{20 m}{n}$ with high probability.\footnote{This is by majorization with the process that just throws every ball to the \emph{first}
    choice; see, e.g,~\citet{azar1999balanced-allocations}.} Therefore our task reduces to showing that the maximum load of the bins in $A$ is $O(\frac{m}{n})$.

Consider the sequence $X_1, \ldots, X_\keysize$ of random variables given by $X_i = \hash{1}(i)$, and let $f(X_1, X_2,
        \ldots, X_\keysize) = |A|$ denote the number of bins $j$ with $\mu_1(\{j\}) \ge \frac{3 e}{n}$.
By Lemma~\ref{lem:mu1}, 
$ \expect[|A|] = \expect[ f ] \le \frac{1}{27} $.
Moreover, the function $f$
satisfies the hypothesis of Theorem~\ref{bounded_dif}.%
We conclude that, with high probability, $|A| \le \frac{n}{5}.$

Now assume that the thesis of Corollary~\ref{coro:mu2} holds, which happens except with probability $o(1/n)$.
Then we have that for all $B \subseteq A$, $\mu_d(B) \le \frac{|B|}{n}.$
Thus Lemma~\ref{lem:perfect} applies to $A$. This means that after throwing $m$ balls, the maximum load among the bins in $A$ is $O(\frac{m}{n})$, as
we wished to show.
\end{proof}

\mycomment{

\begin{theorem}
Suppose $\{X_i\}$ is a sequence of independent random variables with $X_i \in [0, M_i]$ and let $Y = \sum_i X_i$, $\mu = \sum_i \expect[X_i]$.
Then for all $\lambda \ge 0$,
$$ \Pr[ Y \ge \mu + \lambda ] \le e^{-\frac{\lambda^2}{\frac{2 M \lambda}{3} + 2 \sum_i \expect[X_i^2]}}, $$
$$ \Pr[ Y \ge \mu + \lambda ] \le e^{-2 \frac{\lambda^2}{\sum_i M_i^2}}, $$
and, for all $\lambda \ge 6 \mu$,
$$ \Pr[ Y \ge \lambda ] \le e^{-\frac{\lambda}{M} \ln(\frac{\lambda}{e \mu})} .$$
\end{theorem}

\begin{corollary}
Let $Y$ be the sum of $p_i$ of the keys assigned to bin $L$.
Then for all $\lambda \ge 0$,
$$ \Pr[ Y \ge \frac{1}{n} + \lambda ] \le e^{-\frac{\lambda^2}{\frac{2 p_1 \lambda}{3} + 2 \frac{\sum_i p_i^2}{n}}}, $$
$$ \Pr[ Y \ge \frac{1}{n} + \lambda ] \le e^{-2 \frac{\lambda^2}{\sum_i p_i^2}}, $$
and, for all $\lambda \ge 6 \mu$,
$$ \Pr[ Y \ge \lambda ] \le e^{-\frac{\lambda}{p_1} \ln(\frac{n \lambda}{e})} .$$
\end{corollary}

\begin{theorem}
Suppose $\{X_i\}$ is a sequence of independent random variables with $X_i \in [a_i, b_i]$ and let $Y = \sum_i X_i$, $\mu = \sum_i \expect[X_i]$.
Then

$$ \Pr[ Y \ge \mu + \lambda ] \le e^{-\frac{2 \lambda^2}{\sum_i (b_i-a_i)^2}} .$$

\end{theorem}

\begin{theorem}\label{chernoff}
Let $X_1, X_2, \ldots$ be independent random variables in $[0, M]$ and $\mu = \sum_i E[ X_i ]$.
For all $R \ge 6 \mu$,
$$ \Pr[ X \ge R ] \le e^{-\frac{R}{6 M} \log \frac{R}{\mu}},$$
and for all $R \ge 0$
$$ \Pr[ X \ge R ] \le e^{-\frac{(R - \mu)^2 \mu}{M}}. $$

\end{theorem}

\subsection{notes}
The collision probability $q = \sum_i p_i^2$ satisfies $p_1^2 \le q \le p_1$.
By the Hoeffding bound, the weighted load of each bin is $\frac{1}{n} \pm \sqrt{2 q \ln n}$.
\todo{Actually the $\tau_\calD$ I'm defining is simply the expansion parameter of a random 2-regular bipartite graph! We can probably reference to results and
    make everything shorter}.

When $d = 1$, the maximum fractional load is $\tau_\calD^1 = \min(O(\frac{\ln n}{\ln \ln n}), \frac{1}{n} + \frac{\sqrt{\frac{p_1 \ln n}{n}}})$
        (the second bound is better when $p_1 < \frac{1}{n \ln n}$).
        Whereas for $d \ge 2$, $\tau_\calD^2 = \min(O(\frac{1}{n} + p_1))$.

Build a graph $G$: from sink to each of the $k$ keys, from each key $i$ to its two bin choices $f_i, s_i$, and then from each bin to the sink with
capacity $t$. We want to minimize $t$ subject to being able to pass flow $\sum p_i = 1$ from source to sink. This is a parametric max-flow problem, we
can e.g. perform binary search on $t$.
Let $G$ be the bipartite graph representing that key $i$ has bin $j$ as a choice.
The LP formulation of the balancing problem when we know the probabilities is as follows.

minimize $t$
subject to $t - \sum_i \delta_{ij} p_i \cdot x_{ij} \ge 0$ for all bins $j$.
           $\sum_j \delta_{ij} x_{ij} \ge p_i$, i.e.,
           $x_{f_i} + x_{s_i} \ge 1$ for all keys $i$.

           $t, x_{ij} \ge 0$.

By multiplying the first inequalities by $\alpha_j \ge 0$ and the last by $\beta_i \ge 0$, we obtain the dual formulation

maximize $\sum_i \beta_i$
subject to
        $\sum \alpha_j \ge 1$ for all keys $i$ (can be replaced with $ = 1$).
        $-p_i \alpha_j + b_i \le 0$ (can be replaced with $b_i = p_i \min(\alpha_{f_i}, \alpha_{s_i})$).
        $\alpha_j, \beta_i \ge 0$.

This is equivalent to
maximize $\sum_i p_i \min(\alpha_{f_i}, \alpha_{s_i})$
subject to
        $\sum \alpha_j = 1$ for all keys $i$
        $\alpha_j \ge 0$ (redundant).

A solution to the dual is a lower bound for the primal. This admits a probabilistic interpretation: you pick your distribution $p_j$, I pick a
distribution of bins, and select one of them at random without telling you. The objective function is a lower bound on the load of the bin I select
(for each key, the expected contribution is that term). By LP duality, optimality implies that the expected load of the bin selected is equal to the
maximum load among all bins, hence my distribution is supported on the bins with maximum load. So without loss of generality there is $B \subseteq
[n]$ such that $\alpha_j$ is either equal to $0$ or $1/|B|$.
Define
$$ \mu_G(B) = \sum \{ p_i \mid i \text { key $i$ has both bins in } B \}. $$
and
$$ \tau_\calD = \max_{B \subseteq [n], B \neq \emptyset} \frac{\mu_G(B)}{|B|}. $$

Having defined $\tau_\calD$, we show that it controls the slope of the linear-rate increase in imbalance (when it exists),
       and that  the normalized imbalance converges asymptotically almost surely to $\tau_\calD - 1$.

\subsection{notes2}
Then the solution to the LP problem is $\mu_G$, and the imbalance per unit of time will be $\mu_G - \frac{1}{n}$.

If $R \ge \mu n / k$ and $R > p_1 k$, then this is less than $1/\binom{n}{k}$.
For example, if $\mu = (k / n)^2$ and $R = 12 k \max(1/n, p_1)$, then
$$ \Pr[ X \ge R ] \le 2^{-(R/6 p_1) \log (R/\mu)} \le 2^{-2 k \log(n / k)} \le 1/\binom{n}{k}. $$

Clearly $\tau_\calD \ge \max(\frac{1}{n}, p_1 / 2)$.
\begin{theorem}
With high probability, $\tau_\calD = \Theta(\max(\frac{1}{n}, p_1))$.
\end{theorem}
\begin{proof}
The lower bound follows from the trivial inequalities $\tau_\calD \ge p_1 / 2$ and $\tau_\calD \ge \frac{1}{n}$.
$$ \tau_\calD \le \frac{1}{n} + O(p_1).$$
We apply Chernoff with $\mu = (k / n)^2$, $R = n / k$.
Fix $B \subseteq [n]$, $|B| = k$. Let $X_i$ be $\frac{p_i}{p_1}$ if $i$ has both bins in $B$, and 0 otherwise. Note $X_i \in [0, 1]$
and $\expect X_i = \frac{q_i}{p_1}$, where $q_i= p_i \left(\frac kn\right)^2 = q_i$.
The sum has expectation $\sum_i \expect X_i = \frac{1}{p_1} (\frac kn)^2$.
Note that $\mu_G(B) = p_1 \sum_i \expect X_i$, which has expectation $(\frac kn)^2$.
We applyChernoff

\end{proof}
Observe that if $p_2, p_3, \ldots$ are all very small, then
$\tau_\calD$ will be around $p/2+2/n^2$.
\subsection{obs1}
\mycomment{
It is well known that with one choice, the imbalance grows as $\Theta(\sqrt{m \log n / n}$.

After seeing $n / 2$ keys, I have occupied $n(1-1/e)$ bins. Let $\tau = p_1 + \ldots + p_{n/2}$.
There is now a set of fractional size $x=1-1/e$ with mass $\tau + (1 - \tau)x^2$,
      hence $\mu(B) \ge \tau/x+(1-\tau)x = x+\tau(1/x -x) = 0.6312 + 0.94986 \tau$. If $\tau \ge x/(1+x) = 0.36788$, this is larger than one.

For example, consider the scenario that we have $n$ keys with probabilities $p_i = 1 / n$. Assign the first $n/2$ keys; the right neighbourhood is
then roughly $n (1-1/e)$ because the probability that a bin is hit is $1-(1-1/n)^2 \approx 2/n$. So about an $1/e$ fraction is never hit. So we have
$n/2$ balls and a right set of size $0.63212 n$. Now every other ball falls within this set with probability $(1-1/e)^2=0.39958$. So we expect another
$0.19979$ balls to fall there. In total we have a set of $0.63212 n$ bins with a left neighbourhood of $0.69979$. Therefore the $\mu$
is at least $1.1071 = (x+1/x)/2$, where $x=1-1/e$.

For each unordered pair of bins, the number of keys assigned to that
pair is a binomial $B(n, (2/n)^2)$, with expectation $4/n$. The binomial can be approximated by a Poisson with parameter $\lambda = 4/n$.
The probability that a pair of bins has load

For example, if we have $n$ keys with $p_i = 1/n$, then there is a pair of bins with 1-load $\log \log n / n$ and the imbalance will be this after just $m = n$ throws.
If we have $n^2$ keys with $p_i = 1 / n^2$, then $\mu_G = 1/n + \Theta(1/n^2)$ and the whole thing still applies (up to $m = n$).
In the geometric power of two choices, there are $n^2$ keys and the maximum $p_i$ is $(1/n^2) \log n$.

Let us first consider the easy case where $d = 1$. Our process then gives an $O(\log n / \log \log n)$ approximation algorithm:
\begin{theorem}
Assume $d = 1$. With high probability, the maximum load at the end of the colored balls and bins process is at most $\frac{\log n}{\log \log n} (M + m
        / n)$.
\end{theorem}

This is a straightforward consequence of the following Chernoff-like bound (take $\mu_i = m/n$):
\begin{lemma}
Let $X_1, \ldots, X_n$ be independent random variables in $[0, M]$, where $\expect[X_i] \le \mu_i$.
Then
$$ \Pr\left[ \sum X_i > \alpha \max(M, \sum_i \mu_i) \right] \le (e / \alpha)^\alpha. $$
\end{lemma}
\begin{proof}
There are two cases. Both are based on the standard version
$$ \Pr[ \sum X_i > \alpha \sum_i \mu_i ] \le (e / \alpha)^{(\alpha \sum \mu_i / M)}. $$

If $\sum \mu_i \ge M$, then the rhs is at most $(e/\alpha)^\alpha$ and we are done. This is the case where the average load is higher than $M$, or $n$
is small enogh.
If $\sum \mu_i \le M$, then
$$ \Pr[ \sum X_i > \alpha M ] = \Pr[ \sum X_i > \alpha M / (\sum \mu_i) \sum\mu_i ] \le (e / \alpha)^\alpha. $$
\end{proof}
}

\subsection{obs}
Let the weight of a key denote the number of times it occurs in the stream.
We may want to assume that the key sequence is a series of $n$ random draws from a key distribution $\mathcal D$, in which case we can normalize the
 weights by making them equal to the corresponding probabilities. Then the maximum weight is related to the min-entropy of $\mathcal D$,
 and the average key weight is the collision probability of $\mathcal D$.

Possibly related to key splitting: take the case $m = n$. Then the maximum load is
 $\ln \ln n / \ln d - O(1)$, even if each time a ball is placed we are allowed to move balls among the $d$ bins to equalize loads as much as possible.
 On the other hand, if we have access to all $2 n$ choices for the $n$ balls, we can place each ball into one of its choices such that we end up with
 a \emph{constant} maximum load. The algorithm picks a constant bound $k$ and starts with $n$ active balls; at each step it finds a bin that has at
 least one but no more than $k$ active balls that have chosen it, assign these active balls to that bin, and remove those balls from the set of active
 balls. If the algorithm ends with no active balls remaining, it succeeded.

The power of $\log n$ choices guarantees perfect balancing. This is closely related to Erdos and Renyi's result that a random bipartite graph of degree $\log n$
has a perfect matching.
}

\section{Evaluation}
\label{sec:evaluation}

We assess the performance of our proposal by using both simulations and a real deployment.
In so doing, we answer the following questions:
\begin{squishlist}
\item[\textbf{Q1:}]
What is the effect of key splitting on \potc?
\item[\textbf{Q2:}]
How does local estimation compare to a global oracle?
\item[\textbf{Q3:}]
How robust is \pkg? %
\item[\textbf{Q4:}]
What is the overall effect of \pkg on applications deployed on a real \dspe?
\end{squishlist}

\subsection{Experimental Setup}

\begin{table}[t]
\caption{Summary of the datasets used in the experiments: number of messages, number of keys and percentage of messages having the most frequent key ($p_1$).} %
\centering
\small
\begin{tabular}{l c r r r}
\toprule
Dataset		&	Symbol	&	Messages				&	Keys			& 	$p_1$(\%)		\\
\midrule
Wikipedia		&	WP		&	\num{22}M		&	\num{2,9}M		&	\num{9,32}	\\
Twitter		&	TW		&	\num{1,2}G		&	\num{31}M		&	\num{2,67}	\\
Cashtags		&	CT		&	\num{690}k		&	\num{2,9}k		&	\num{3,29}	\\
\midrule
Synthetic 1 	&	LN$_1$	&	\num{10}M		&	\num{16}k			&	\num{14,71}	\\
Synthetic 2 	&	LN$_2$	&	\num{10}M		&	\num{1,1}k		&	\num{7,01}	\\
\midrule
LiveJournal	&	LJ		&	\num{69}M		&	\num{4,9}M		&	\num{0,29}	\\
Slashdot0811	&	SL$_1$	&	\num{905}k		&	\num{77}k			&	\num{3,28}	\\
Slashdot0902 	&	SL$_2$	&	\num{948}k		&	\num{82}k			&	\num{3,11}	\\
\bottomrule
\end{tabular}
\label{tab:summary-datasets}
\end{table}

\spara{Datasets.}
Table~\ref{tab:summary-datasets} summarizes the datasets used.
We use two main real datasets, one from \emph{Wikipedia} and one from \emph{Twitter}.
These datasets were chosen for their large size, their different degree of skewness, and because they are representative of Web and online social network domains.
The Wikipedia dataset (WP)\footnote{\url{http://www.wikibench.eu/?page_id=60}} is a log of the pages visited during a day in January 2008. %
Each visit is a message and the page's URL represents its key.
The Twitter dataset (TW) is a sample of tweets crawled during July 2012.
Each tweet is parsed and split into its words, which are used as the key for the message.

An additional Twitter dataset comprises a sample of tweets crawled in November 2013.
The keys for the messages are the \emph{cashtags} in these tweets.
A \emph{cashtag} is a ticker symbol used in the stock market to identify a publicly traded company preceded by the dollar sign (e.g., \$AAPL for Apple).
Popular cash tags change from week to week.
This dataset allows to study the effect of shift of skew in the key distribution.

We also generate two synthetic datasets (LN$_1$, LN$_2$) with keys following a log-normal distribution, a commonly used heavy-tailed skewed distribution~\citep{talwar2007weightedcase}.
The parameters of the distribution ($\mu_1$=$1.789$, $\sigma_1$=$2.366$; $\mu_2$=$2.245$, $\sigma_2$=$1.133$) come from an analysis of Orkut, and try to emulate workloads from the online social network domain~\cite{benevenuto09characterizingusers}.

Finally, we experiment on three additional datasets comprised of directed graphs\footnote{\url{http://snap.stanford.edu/data}} (LJ, SL$_1$, SL$_2$).
We use the edges in the graph as messages and the vertices as keys. %
These datasets are used to test the robustness of \pkgs to skew in partitioning the stream \emph{at the sources}, as explained next.
They also represent a different kind of application domain: streaming graph mining.

\spara{Simulation.}
We process the datasets by simulating the \dagr presented in Figure~\ref{fig:imbalance}.
The stream is composed of timestamped keys that are read by multiple independent sources (\sources) via shuffle grouping, unless otherwise specified.
The sources forward the received keys to the workers (\workers) downstream.
In our simulations we assume that the sources perform data extraction and transformation, while the workers perform data aggregation, which is the most computationally expensive part of the \dagr.
Thus, the workers are the bottleneck in the \dagr and the focus for the load balancing.

\subsection{Experimental Results}

\begin{figure*}[t]
\begin{center}
	\includegraphics[width=\textwidth]{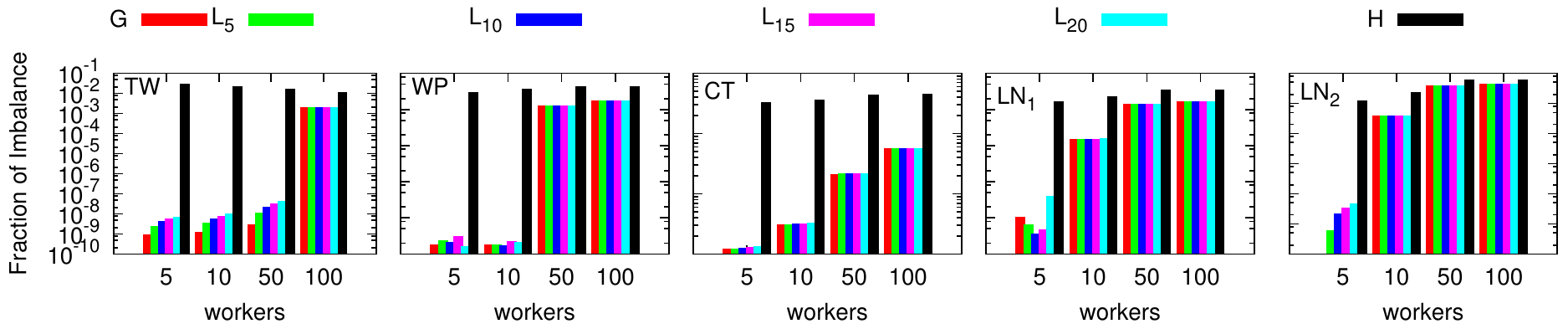}
	\caption{Fraction of average imbalance with respect to total number of messages for each dataset, for different number of workers and number of sources.} %
\label{fig:avg-GLH}
\end{center}
\end{figure*}

\spara{Q1.}
We measure the imbalance in the simulations when using the following techniques:
\begin{squishlist}
\item[H:] Hashing, which represents standard key grouping (\kg) and is our main baseline.
We use a $64$-bit Murmur hash function to minimize the probability of collision.
\item[\potc:] Power of two choices \emph{without} using key splitting, i.e., traditional \potc applied to key grouping.
\item[On-Greedy:] Online greedy algorithm that picks the least loaded worker to handle a new key.
\item[Off-Greedy:] Offline greedy sorts the keys by decreasing frequency and executes On-Greedy.
\item[\pkgs:] \potc with key splitting. %
\end{squishlist}

\begin{table}[t]
\caption{Average imbalance when varying the number of workers for the Wikipedia and Twitter datasets.}
\tabcolsep=0.1cm
\centering
\small
\begin{tabular}{l r r r r r r r r}
\toprule
Dataset		&	\multicolumn{4}{c}{WP} 					&	\multicolumn{4}{c}{TW}					\\
\cmidrule(lr){2-5} \cmidrule(lr){6-9}
$W$			&	5		&	10		&	50		&	100		&	5		&	10		&	50		&	100		\\
\midrule
\pkgs		&	0.8		&	2.9		&	5.9e5	&	8.0e5	&	0.4		&	1.7		&	2.74		&	4.0e6	\\
Off-Greedy	&	0.8		&	0.9		&	1.6e6	&	1.8e6	&	0.4		&	0.7		&	7.8e6	&	2.0e7	\\
On-Greedy	&	7.8		&	1.4e5	&	1.6e6	&	1.8e6	&	8.4		&	92.7		&	1.2e7	&	2.0e7	\\
\potc			&	15.8		&	1.7e5	&	1.6e6	&	1.8e6	&	2.2e4	&	5.1e3	&	1.4e7	&	2.0e7	\\
Hashing		&	1.4e6	&	1.7e6	&	2.0e6	&	2.0e6	&	4.1e7	&	3.7e7	&	2.4e7	&	3.3e7	\\
\bottomrule
\end{tabular}
\label{tab:key-splitting}
\end{table}

Note that \pkgs is the only method that uses key splitting.
Off-Greedy knows the whole distribution of keys so it represents an unfair comparison for online algorithms.

Table~\ref{tab:key-splitting} shows the results of the comparison on the two main datasets WP and TW.
Each value is the average imbalance measured throughout the simulation.
As expected, hashing performs the worst, creating a large imbalance in all cases.
While \potc performs better than hashing in all the experiments, it is outclassed by On-Greedy on TW.
On-Greedy performs very close to Off-Greedy, which is a good result considering that it is an online algorithm.
Interestingly, \pkgs performs even better than Off-Greedy.
Relaxing the constraint of \kg allows to achieve a load balance comparable to offline algorithms.

We conclude that \potc alone is not enough to guarantee good load balance, and key splitting is fundamental not only to make the technique practical in a distributed system, but also to make it effective in a streaming setting.
As expected, increasing the number of workers also increases the average imbalance.
The behavior of the system is binary: either well balanced or largely imbalanced.
The transition between the two states happens when the number of workers surpasses the limit $O(1/p_1)$ described in Section~\ref{sec:theory}, which happens around $50$ workers for WP and $100$ for TW.

\spara{Q2.}
Given the aforementioned results, we focus our attention on \pkgs henceforth.
So far, it still uses global information about the load of the workers when deciding which choice to make.
Next, we experiment with local estimation, i.e., each source performs its own estimation of the worker load, based on the sub-stream processed so far.

We consider the following alternatives:
\begin{squishlist}
\item[G:] \pkgs with global information of worker load.
\item[L:] \pkgs with local estimation of worker load and different number of sources, e.g., L$_5$ denotes $\numsources=5$.
\item[LP:] \pkgs with local estimation and \emph{periodic probing} of worker load every $T_p$ minutes. For instance, L$_5$P$_1$ denotes $\numsources=5$ and $T_p=1$.
When probing is executed, the local estimate vector is set to the actual load of the workers.
\end{squishlist}

Figure~\ref{fig:avg-GLH} shows the average imbalance (normalized to the size of the dataset) with different techniques, for different number of sources and workers, and for several datasets.
The baseline (H) always imposes very high load imbalance on the workers.
Conversely, \pkgs with local estimation (L) has always a lower imbalance.
Furthermore, the difference from the global variant (G) is always less than one order of magnitude.
Finally, this result is robust to changes in the number of sources.

Figure~\ref{fig:avg-time} displays the imbalance of the system through time $I(t)$ for TW, WP and CT, 5 sources, and for $W=10$ and $50$.
Results for $W=5$ and $W=100$ are omitted as they are similar to $W=10$ and $W=50$, respectively.
\pkgs with global information (G) and its variant with local estimation (L$_5$) perform best.
Interestingly, even though both G and L achieve very good load balance, their choices are quite different.
In an experiment measuring the agreement on the destination of each message, G and L have only 47\% Jaccard overlap.
Hence, L reaches a local minimum which is very close in value to the one obtained by G, although different.
Also in this case, good balance can only be achieved up to a number of workers that depends on the dataset.
When that number is exceeded, the imbalance increases rapidly, as seen in the cases of WP and partially for CT for $W=50$, where all techniques lead to the same high load imbalance.

Finally, we compare our local estimation strategy with a variant that makes use of periodic probing of workers' load every minute (L$_5$P$_1$).
Probing removes any inconsistency in the load estimates that the sources may have accumulated.
However, interestingly, this technique does not improve the load balance, as shown in Figure~\ref{fig:avg-time}.
Even increasing the frequency of probing does not reduce imbalance (results not shown in the figure for clarity).
In conclusion, local information is sufficient to obtain good load balance, therefore it is not necessary to incur the overhead of probing.

\begin{figure}[t]
\begin{center}
	\includegraphics[scale=0.74]{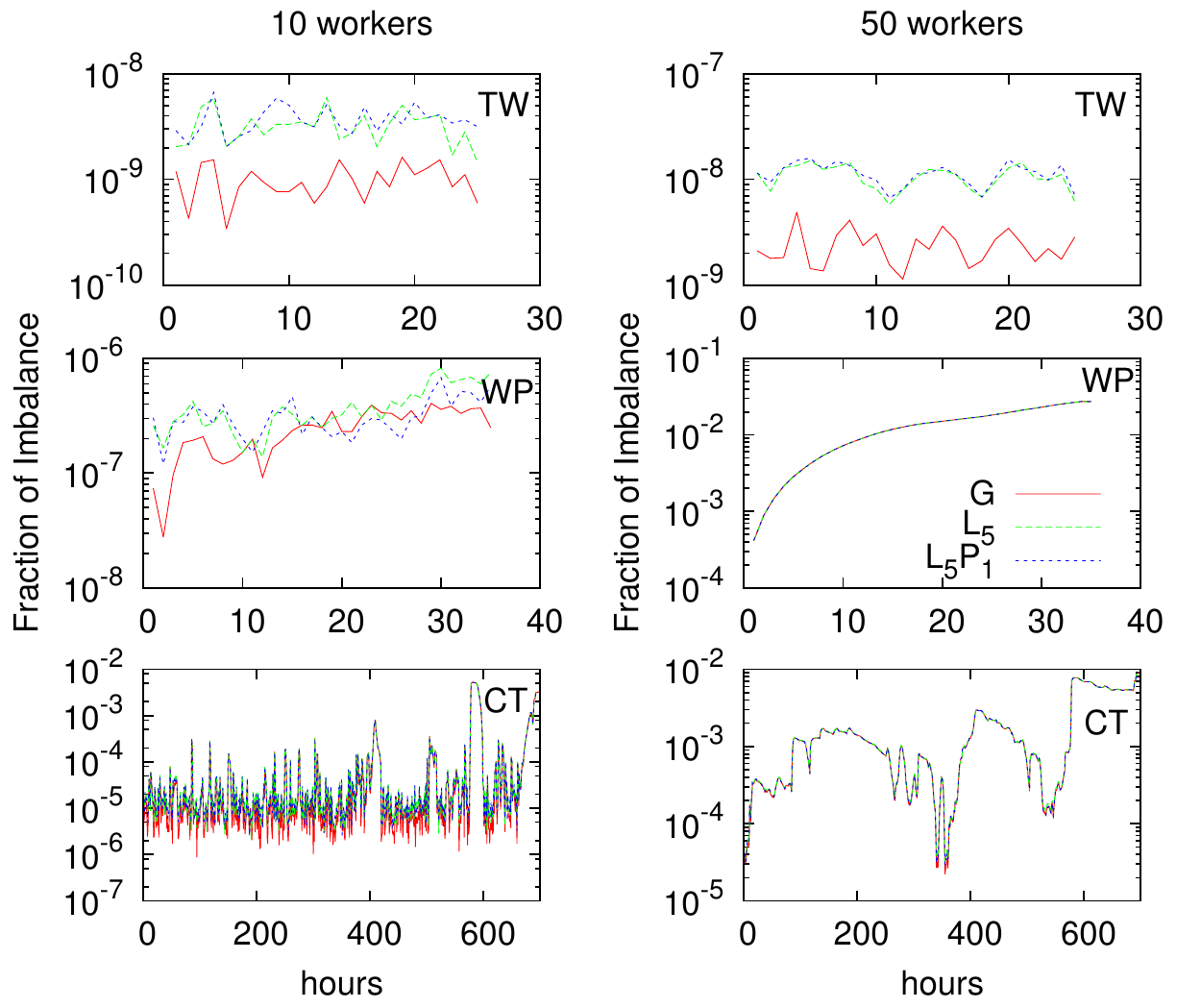}
	\caption{Fraction of imbalance through time for different datasets, techniques, and number of workers, with $S=5$.}
	\label{fig:avg-time}
\hspace{-4mm}
\end{center}
\end{figure}

\enlargethispage{\baselineskip}

\spara{Q3.}
To operationalize this question, we use the directed graphs datasets.
We use \kg to distribute the messages to the sources to test the robustness of \pkgs to \emph{skew in the sources}, i.e., when each source forwards an uneven part of the stream.
We simulate a simple application that computes a function of the incoming edges of a vertex (e.g., in-degree, PageRank).
The input keys for the source \pe is the source vertex id, while the key sent to the worker \pe is the destination vertex id, that is, the source \pe inverts the edge.
This schema projects the out-degree distribution of the graph on sources, and the in-degree distribution on workers, both of which are highly skewed.

\begin{figure}[t]
\begin{center}
	\includegraphics[width=\columnwidth]{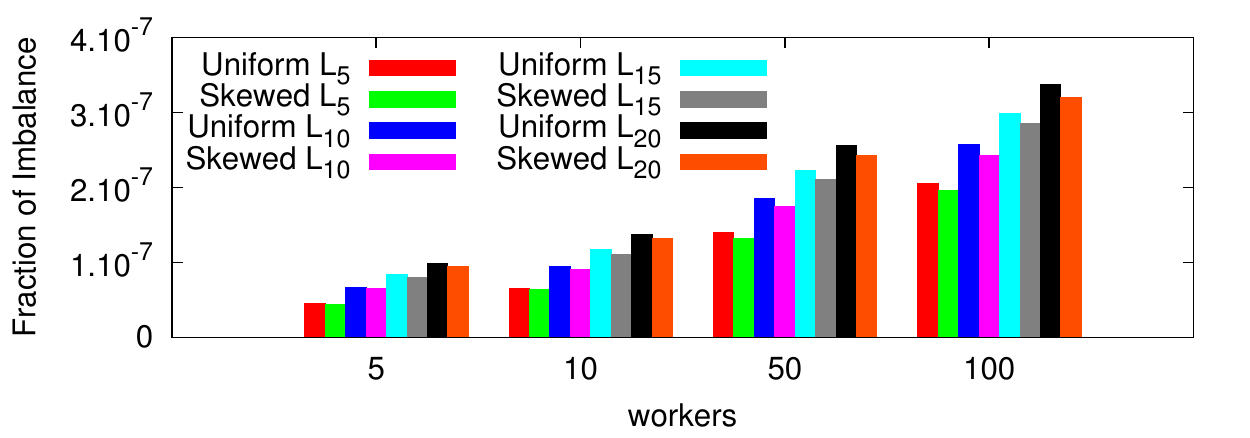}
	\caption{Fraction of average imbalance with uniform and skewed splitting of the input keys on the sources when using the LJ graph.}
	\label{fig:avg-skewed-shuffled}
\end{center}
\end{figure}

Figure~\ref{fig:avg-skewed-shuffled} shows the average imbalance for the experiments with a skewed split of the keys to sources for the LJ social graph (results on SL$_1$ and SL$_2$ are similar to LJ and are omitted due to space constraint).
For comparison, we include the results when the split is performed \emph{uniformly} using shuffle grouping of keys on sources.
On average, the imbalance generated by the skew on sources is similar to the one obtained with uniform splitting.
As expected, the imbalance slightly increases as the number of sources and workers increase, but, 
in general, it remains at very low absolute values.

To answer Q3, we additionally experiment with drift in the skew distribution by using the cashtag dataset (CT).
The bottom row of Figure~\ref{fig:avg-time} demonstrates that all techniques achieve a low imbalance, even though the change of key popularity through time generates occasional spikes.

In conclusion, \pkgs is robust to skew on the sources, and can therefore be chained to key grouping.
It is also robust to the drift in key distribution common of many real-world streams.

\enlargethispage{\baselineskip}

\spara{Q4.}
We implement and test our technique on the streaming top-k word count example, and perform two experiments to compare \pkgs, \kg, and \sg on WP.
We choose word count as it is one of the simplest possible examples, thus limiting the number of confounding factors.
It is also representative of many data mining algorithms as the ones described in Section~\ref{sec:applications} (e.g., counting frequent items or co-occurrences of feature-class pairs).
Due to the requirement of real-world deployment on a \dspe, we ignore techniques that require coordination (i.e., \potc and On-Greedy).
We use a topology configuration of a single source along with 9 workers (counters) running on a storm cluster of 10 virtual servers. %
We report overall throughput, end-to-end latency, and memory usage. %

In the first experiment, we emulate different levels of CPU consumption per key by adding a fixed delay to the processing.
We prefer this solution over implementing a specific application in order to be able to control the load on the workers.
We choose a range that is able to bring our configuration to a saturation point, although the raw numbers would vary for different setups.
Even though real deployments rarely operate at saturation point, \pkgs allows better resource utilization, therefore supporting the same workload on a smaller number of machines.
In this case, the minimum delay ($0.1$ms) corresponds approximately to reading 400kB sequentially from memory, while the maximum delay ($1$ms) to $\frac{1}{10}$-th of a disk seek.\footnote{\url{http://brenocon.com/dean_perf.html}}
Nevertheless, even more expensive tasks exist: parsing a sentence with NLP tools can take up to $500$ms.\footnote{\url{http://nlp.stanford.edu/software/parser-faq.shtml\#n}}

The system does not perform aggregation in this setup, as we are only interested in the raw effect on the workers.
Figure~\ref{fig:throughput_memory_cpu_aggr}(a) shows the throughput achieved when varying the CPU delay for the three partitioning strategies.
Regardless of the delay, \sg and \pkgs perform similarly, and their throughput is higher than \kg.
The throughput of \kg is reduced by $\approx 60\%$ when the CPU delay increases tenfold, while the impact on \pkgs and \sg is smaller ($\approx 37\%$ decrease).
We deduce that reducing the imbalance is critical for clusters operating close to their saturation point, and that \pkgs is able to handle bottlenecks similarly to \sg and better than \kg.
In addition, the imbalance generated by \kg translates into longer latencies for the application.
When the workers are heavily loaded, the average latency with \kg is up to $45\%$ larger than with \pkgs. %
Finally, the benefits of \pkgs over \sg regarding memory are substantial.
Overall, \pkgs ($3.6M$ counters) requires about $30\%$ more memory than \kg ($2.9M$ counters), but about half the memory of \sg ($7.2M$ counters).

\begin{figure}[t]
\begin{center}
	\includegraphics[scale=0.7]{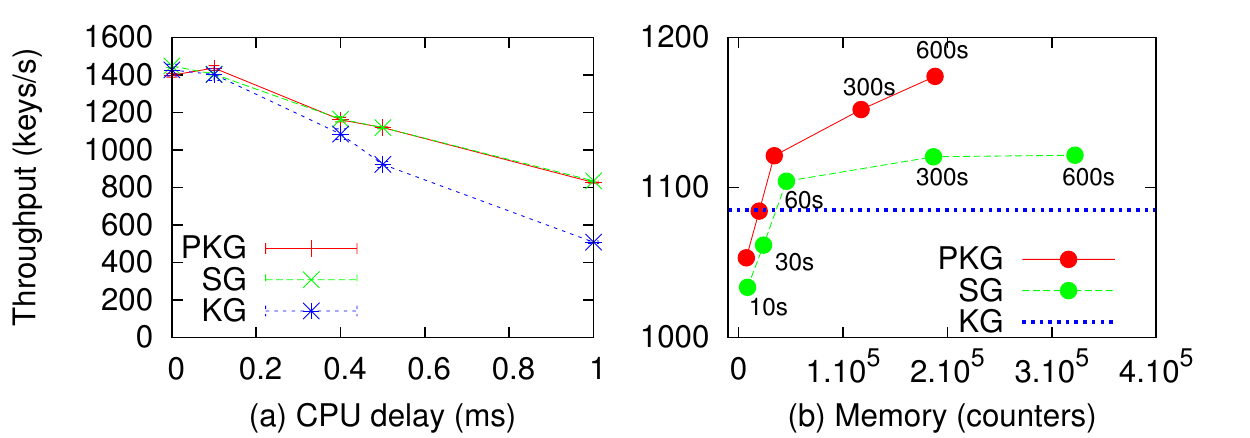}
	\caption{(a) Throughput for \pkgs, \sg and \kg for different CPU delays. (b) Throughput for \pkgs and \sg vs. average memory for different aggregation periods.}
	\label{fig:throughput_memory_cpu_aggr}
\end{center}
\end{figure}

In the second experiment, we fix the CPU delay to $0.4$ms per key, as it is the saturation point for \kg in our setup.
We activate the aggregation of counters at different time intervals $T$ to emulate different application policies for when to receive up-to-date top-k word counts.
In this case, \pkgs and \sg need additional memory compared to \kg to keep partial counters.
Shorter aggregation periods reduce the memory requirements, as partial counters are flushed often, at the cost of a higher number of aggregation messages.
Figure~\ref{fig:throughput_memory_cpu_aggr}(b) shows the relationship between throughput and memory overhead for \pkgs and \sg.
The throughput of \kg is shown for comparison.
For all values of aggregation period, \pkgs achieves higher throughput than \sg, with lower memory overhead and similar average latency per message. %
When the aggregation period is above $30$s, the benefits of \pkgs compensate its extra overhead and its overall throughput is higher than when using \kg.

\section{Applications}
\label{sec:applications}

\pkgs is a novel programming primitive for stream partitioning and not every algorithm can be expressed with it.
In general, all algorithms that use shuffle grouping can use \pkgs to reduce their memory footprint.
In addition, many algorithms expressed via key grouping can be rewritten to use \pkgs in order to get better load balancing.
In this section we provide a few such examples of common data mining algorithms, and show the advantages of \pkgs. %
Henceforth, we assume that each message contains a data point for the application, e.g., a feature vector in a high-dimensional space.

\subsection{Na\"{i}ve Bayes Classifier}
A na\"{i}ve Bayes classifier is a probabilistic model that assumes independence of features.
It estimates the probability of a class $C$ given a feature vector $X$ by using Bayes' theorem.
In practice, the classifier works by counting the frequency of co-occurrence of each feature and class values.

The simplest way to parallelize this algorithm is to spread the counters across several workers via vertical parallelism, i.e., each feature is tracked independently in parallel.
Following this design, the algorithm can be implemented by the same pattern used for the \kg example in Section~\ref{sec:existing-partitioning}. %
Sparse datasets often have a skewed distribution of features, e.g., for text classification.
Therefore, this implementation suffers from the same load imbalance, which \pkgs solves.

Horizontal parallelism can also be used to parallelize the algorithm, i.e., by shuffling messages to separate workers.
This implementation uses the same pattern as the \dagr in the \sg example in Section~\ref{sec:existing-partitioning}. %
The count for a single feature-class pair is distributed across several workers, and needs to be combined at prediction (query) time. %
This combination requires broadcasting the query to all the workers, as a feature can be tracked by any worker.
This implementation, while balancing the work better than key grouping, requires an expensive query stage that may be affected by stragglers.

\pkgs tracks each feature on two workers and avoids replicating counters on all workers.
Furthermore, the two workers are deterministically assigned for each feature.
Thus, at query time, the algorithm needs to probe only two workers for each feature, rather than having to broadcast it to all the workers.
The resulting query phase is less expensive and less sensitive to stragglers than with shuffle grouping.

\subsection{Streaming Parallel Decision Tree}

A decision tree is a classification algorithm that uses a tree-like model where nodes are tests on features, branches are possible outcomes, and leafs are class assignments.

\citet{ben-haim2010spdt} propose an algorithm to build a streaming parallel decision tree that uses approximated histograms to find the test value for continuous features.
Messages are shuffled among $W$ workers.
Each worker generates histograms independently for its sub-stream, one histogram for each feature-class-leaf triplet.
These histograms are then periodically sent to a single aggregator that merges them to get an approximated histogram for the whole stream.
The aggregator uses this final histogram to grow the model by taking split decisions for the current leaves in the tree.
Overall, the algorithm keeps $W \times D \times C \times L$ histograms, where $D$ is the number of features, $C$ is the number of classes, and $L$ is the current number of leaves.

The memory footprint of the algorithm depends on $W$, so it is impossible to fit larger models by increasing the parallelism.
Moreover, the aggregator needs to merge $W \times D \times C$ histograms each time a split decision is tried, and merging the histograms is one of the most expensive operations.

Instead, \pkgs reduces both the space complexity and aggregation cost.
If applied on the features of each message, a single feature is tracked by two workers, with an overall cost of only $2$$\times$$D$$\times$$C$$\times$$L$ histograms.
Furthermore, the aggregator needs to merge only two histograms per feature-class-leaf triplet.
This scheme allows to alleviate memory pressure by adding more workers, as the space complexity does not depend on $W$.

\subsection{Heavy Hitters and Space Saving}

The heavy hitters problem consists in finding the top-k most frequent items occurring in a stream.
The \textsc{SpaceSaving}~\citep{metwally2005spacesaving} algorithm solves this problem approximately in constant time and space.
Recently, \citet{berinde2010heavyhitters} have shown that \textsc{SpaceSaving} is space-optimal, and how to extend its guarantees to merged summaries.
This result allows for parallelized execution by merging partial summaries built independently on separate sub-streams.

In this case, the error bound on the frequency of a single item depends on a term representing the error due to the merging, plus another term which is the sum of the errors of each individual summary for a given item $i$:
$$
\mid \hat{f}_i - f_i \mid \leq \Delta_{f} + \sum_{j=i}^{W}\Delta_{j}
$$
where $f_i$ is the true frequency of item $i$ and $\hat{f}_i$ is the estimated one, each $\Delta_j$ is the error from summarizing each sub-stream, while $\Delta_{f}$ is the error from summarizing the whole stream, i.e., from merging the summaries.

Observe that the error bound depends on the parallelism level $W$.
Conversely, by using \kg, the error for an item depends only on a single summary, thus it is equivalent to the sequential case, at the expense of poor load balancing.

Using \pkgs we achieve both benefits: the load is balanced among workers, and the error for each item depends on the sum of only two error
terms, regardless of the parallelism level.
However, the individual error bounds may depend on $W$.

\section{Related Work}
\label{sec:rel-work}
Various works in the literature either extend the theoretical results from the power of two choices, or apply them to the design of large-scale systems for data processing.

\spara{Theoretical results.}\label{sec:theor_choices}
Load balancing in a \dspe can be seen as a balls-and-bins problem, where $m$ balls are to be placed in $n$ bins.
The power of two choices has been extensively researched from a theoretical point of view for balancing the load among machines~\cite{mitzenmacher2001power,mitzenmacher2001potc-survey}. %
Previous results consider each ball equivalent.
For a \dspe, this assumption holds if we map balls to messages and bins to servers.
However, if we map balls to {\em keys}, more popular keys should be consider to be heavier.
\citep{talwar2007weightedcase} tackle the case where each ball has a weight drawn independently from a fixed weight distribution $\mathcal{X}$.
They prove that, as long as $\mathcal{X}$ is ``smooth'', the expected imbalance is independent of the number of balls. %
However, the solution assumes that $\mathcal{X}$ is known beforehand, which is not the case in a streaming setting.
Thus, in our work we take the standard approach of mapping balls to messages.

Another assumption common in previous works is that there is a single source of balls. 
Existing algorithms that extend \potc to multiple sources execute several rounds of intra-source coordination before taking a decision~\cite{adler1995parallelrandomized,lenzen2011parallelrandomized, park2011multiplechoices}.
Overall, these techniques incur a significant coordination overhead, which becomes prohibitive in a \dspe that handles thousands of messages per second.

\spara{Stream processing systems.}
Existing load balancing techniques for \dspes are analogous to key grouping with rebalancing~\citep{shah2003flux,cherniack2003scalable,xing2005dynamic,gedik2013partitioning,balkesen2013adaptive,castro2013integrating}.
In our work, we consider operators that allow replication and aggregation, similar to a standard combiner in map-reduce, and show that it is sufficient to balance load among two replicas based local load estimation.
We refer to Section~\ref{sec:existing-partitioning} for a more extensive discussion of key grouping with rebalancing.
Flux monitors the load of each operator, ranks servers by load, and migrates operators from the most loaded to the least loaded server, from the second most loaded to the second least loaded, and so on~\citep{shah2003flux}.
Aurora* and Medusa propose policies to migrating operators in \dspes and federated \dspes~\citep{cherniack2003scalable}.
Borealis uses a similar approach but it also aims at reducing the correlation of load spikes among operators placed on the same server~\citep{xing2005dynamic}.
This correlation is estimated by using a finite set of load samples taken in the recent past.
\citet{gedik2013partitioning} developed a partitioning function (a hybrid between explicit mapping and consistent hashing of items to servers) for stateful data parallelism in DSPEs that leverages item frequencies to control migration cost and imbalance in the system.
Similarly, \citet{ balkesen2013adaptive} proposed frequency-aware hash-based partitioning to achieve load balance.
\citet{castro2013integrating} propose integrating common operator state management techniques for both checkpointing and migration.

\enlargethispage{\baselineskip}

\spara{Other distributed systems.}
Several storage systems use consistent hashing to allocate data items to servers~\citep{karger1997consistent}.
Consistent hashing substantially produces a random allocation and is designed to deal with systems where the set of servers available varies over time.
In this paper, we propose replicating \dspe operators on two servers selected at random.
One could use consistent hashing also to select these two replicas, using the replication technique used by Chord~\citep{stoica2001chord} and other systems.

Sparrow~\cite{ousterhout2013sparrow} is a stateless distributed job scheduler that exploits a variant of the power of two choices~\citep{park2011multiplechoices}. 
It employs batch probing, along with late binding, to assign m tasks of a job to the least loaded of $d \times m$ randomly selected workers ($d \geq 1$). 
Sparrow considers only independent tasks that can be executed by any worker.
In \dspes, a message can only be sent to the workers that are accumulating the state corresponding to the key of that message.
Furthermore, \dspes deal with messages that arrive at a much higher rate than Sparrow's fine-grained tasks, so we prefer to use local load estimation.

In the domain of graph processing, several systems have been proposed to solve the load balancing problem, e.g., Mizan~\cite{khayyat2013mizan}, GPS~\cite{salihoglu2013gps}, and xDGP~\cite{vaqueroxdgp}.
Most of these systems perform dynamic load rebalancing at runtime via vertex migration.
We have already discussed why rebalancing is impractical in our context in Section~\ref{sec:preliminaries}.

Finally, SkewTune~\cite{kwon2012skewtune} solves the problem of load balancing in MapReduce-like systems by identifying and redistributing the unprocessed data from the stragglers to other workers.
Techniques such as SkewTune are a good choice for batch processing systems, but cannot be directly applied to \dspes. %

\section{Conclusion}
Despite being a well-known problem in the literature, load balancing has not been exhaustively studied in the context of distributed stream processing engines.
Current solutions fail to provide satisfactory load balance when faced with skewed datasets. %
To solve this issue, we introduced \pkg, a new stream partitioning strategy that allows better load balance than key grouping while incurring less memory overhead than shuffle grouping.
Compared to key grouping, \pkgs is able to reduce the imbalance by up to several orders of magnitude, thus improving throughput and latency of an example application by up to 45\%. %

This work gives rise to further interesting research questions.
Is it possible to achieve good load balance without foregoing atomicity of processing of keys?
What are the necessary conditions, and how can it be achieved?
In particular, can a solution based on rebalancing be practical?
And in a larger perspective, which other primitives can a \dspe offer to express algorithms effectively while making them run efficiently?
While most \dspes have settled on just a small set, the design space still remains largely unexplored.

\bibliographystyle{IEEEtranN}

\bibliography{references}
\end{document}